\newlength{\tablcol}
\def\FF{{\mathbb F}} \def\ZZ{{\mathbb Z}} \def\NN{{\mathbb N}}
\newtheorem{theorem}{Theorem}
\newtheorem{proposition}{Proposition}
\newtheorem{definition}{Definition}
\newtheorem{lemma}{Lemma}
\newtheorem{example}{Example}
\DeclareMathOperator{\lker}{lker}
\DeclareMathOperator{\Id}{id}
\newcommand{\coloneq}{:=}   
\newcommand{\ie}{i.e.~}
\newcommand\blfootnote[1]{%
  \begingroup
  \renewcommand\thefootnote{}\footnote{#1}%
  \addtocounter{footnote}{-1}%
  \endgroup
}
\newcommand{\cC}{\mathcal{C}}
\newcommand{\cE}{\mathcal{E}}
\newcommand{\cP}{\mathcal{P}}
\newcommand{\cS}{\mathcal{S}}
\newcommand{\bG}{\boldsymbol{G}}
\newcommand{\bM}{\boldsymbol{M}}
\newcommand{\bc}{\boldsymbol{c}}
\newcommand{\bv}{\boldsymbol{v}}
\newcommand{\bw}{\boldsymbol{w}}
\newcommand{\bx}{\boldsymbol{x}}
\newcommand{\by}{\boldsymbol{y}}
\newcommand{\0}{\boldsymbol{0}}
\newcommand{\algoVar}[1]{\mathsf{#1}} 
\newif\ifcomment
\definecolor{TUMBlau}{RGB}{0,101,189} 
\definecolor{TUMBlauDunkel}{RGB}{0,82,147} 
\definecolor{TUMBlauHell}{RGB}{152,198,234} 
\definecolor{TUMBlauMittel}{RGB}{100,160,200} 
\definecolor{TUMElfenbein}{RGB}{218,215,203} 
\definecolor{TUMGruen}{RGB}{162,173,0} 
\definecolor{TUMOrange}{RGB}{227,114,34} 
\definecolor{TUMGrau}{gray}{0.6} 
\definecolor{TUMGruenDunkel}{RGB}{0,124,48} 
\definecolor{TUMRot}{RGB}{196,7,27} 
\definecolor{TUMBlue}{RGB}{0,101,189} 
\definecolor{TUMBlueDark}{RGB}{0,82,147} 
\definecolor{TUMBlueLight}{RGB}{152,198,234} 
\definecolor{TUMBlueMedium}{RGB}{100,160,200} 
\definecolor{TUMIvory}{RGB}{218,215,203} 
\definecolor{TUMGreen}{RGB}{162,173,0} 
\definecolor{TUMGray}{gray}{0.6} 
\definecolor{TUMGrayDark}{gray}{0.3} 
\definecolor{TUMGreenDark}{RGB}{0,124,48} 
\definecolor{TUMRed}{RGB}{196,7,27} 
\renewcommand{\bf}{\boldsymbol{f}}
\def\endproof{\nobreak\kern5pt\nobreak\vrule height4pt width4pt
depth0pt \vskip4pt plus2pt}
\definecolor{ForestGreen}{RGB}{34,139,34}
\begin{document}
\title{A Gr\"obner Approach to Dual-Containing Cyclic Left Module $(\theta,\delta)$-Codes
  over Finite Commutative Frobenius Rings}
\author[1]{Hedongliang Liu {\orcidlink{0000-0001-7512-0654}} (\href{mailto:lia.liu@tum.de}{lia.liu@tum.de})}
\affil[1]{School of Computation Information and Technology, Technical University of Munich, D-80333 M\"unchen, Germany}
\author[2]{Cornelia Ott {\orcidlink{0009-0000-8740-1664}}(\href{mailto:cornelia.ott@dlr.de}{cornelia.ott@dlr.de})}
\affil[2]{Institute of Communications and Navigation, German Aerospace Center (DLR), D-82234 Oberpfaffenhofen-Wessling, Germany}
\author[3]{Felix Ulmer \orcidlink{0000-0002-4907-4479} (\href{mailto:felix.ulmer@univ-rennes1.fr}{felix.ulmer@univ-rennes1.fr})}
\affil[3*]{Univ Rennes, CNRS, IRMAR - UMR 6625, F-35000 Rennes, France}
\date{}
\maketitle
\blfootnote{The work of H.~Liu has been supported by the German Research Foundation (DFG) with a German Israeli Project Cooperation (DIP) under grants no.~PE2398/1-1, KR3517/9-1. The work of F.~Ulmer has been by French projects ANR-11-LABX-0020-01``Centre Henri Lebesgue''. 
}
\blfootnote{Keywords: self-dual code, dual-containing code, skew polynomial, module code, Gr\"obner basis, non-zero derivation, code over rings}

\begin{abstract}
  For a skew polynomial ring $R=A[X;\theta,\delta]$ where $A$ is a commutative Frobenius ring, $\theta$ an endomorphism of $A$ and $\delta$ a $\theta$-derivation of $A$, we consider cyclic left  module codes $\mathcal{C}=Rg/Rf\subset R/Rf$ where $g$ is a left and right divisor of   $f$ in $R$. In this paper, we derive a parity check matrix when $A$ is a finite commutative Frobenius ring using only the framework of skew polynomial rings. We consider rings $A=B[a_1,\ldots,a_s]$  which are free $B$-modules
  where the restriction of  $\delta$ and $\theta$ to $B$ are polynomial maps. If a Gr\"obner basis can be computed over $B$, then we show that all Euclidean and Hermitian dual-containing codes $\mathcal{C}=Rg/Rf\subset R/Rf$ can be computed using a Gr\"obner basis. We also give an algorithm to test if the dual code is again a  cyclic left  module code. We illustrate our approach for rings of order $4$ with non-trivial endomorphism and the Galois ring of characteristic $4$.
\end{abstract}

\section{Introduction}

For a (non-commutative) skew polynomial ring $R=A[X;\theta,\delta]$ where $A$ is a commutative Frobenius ring, $\theta$ an endomorphism of the ring $A$ and $\delta$ a $\theta$-derivation of $A$ (\cref{defder}), we consider codes that are cyclic left  modules (i.e., generated by one element in $R$) of the form $\cC=Rg/Rf\subset R/Rf$ with $f=hg$ for $g,f,h\in R$.
In order to obtain a  parity check matrix for such codes we will make the additional assumption that there exists $\hbar\in R$ such that $f=hg=g\hbar$ (i.e., $g$ is a left and right divisor of $f$).
A parity check matrix of such codes has been derived in a general approach
in  \cite{boulagouaz2013delta}, in \cite[Corollary 4]{BUmodule} for $A=\FF_q$ and in \cite{BD,BD2} and for  a finite commutative Frobenius ring $A$. The framework of pseudo-linear transformations is used in \cite{BD,BD2,boulagouaz2013delta}, while in this paper we only use the framework of skew polynomial rings.
The entries of the parity check matrix are expressions in images under compositions of $\theta$ and $\delta$ of the coefficients of $\hbar$ and~$g$, which are difficult to solve when searching for self-dual/dual-containing codes.

If $A=B[a_1,\ldots,a_s]$  is a free $B$-module and  the restriction of  $\delta$ and $\theta$ to $B$ are polynomial maps (\cref{polmaps}) then we can  transform algebraic expressions of images under $\theta$ and $\delta$ into polynomial expressions over $B$. By representing the unknown coefficients of $\hbar$ and~$g$ as a linear combination of the algebra basis $B[a_1,\ldots,a_s]$, we obtain multivariate polynomial expressions for the entries of a parity check matrix.
The smallest unitary subring of a finite commutative ring $A$ (the image of the canonical map $\ZZ\to A$  given by $1\mapsto 1$) is either isomorphic to a finite field $\FF_p$ of prime order or to an integer modular ring $\ZZ_m=\ZZ/(m)$ (note that $p$ or $m$ here is the characteristic of the ring $A$).
Since  $\theta$ and $\delta$ are polynomial maps over the smallest subalgebra $B\subset A$  and since polynomial equations over $B$ can be solved using a Gr\"obner basis \cite{adams1994introduction,magma}, our approach applies to many rings $A$.
Within the computation complexity constraints of the Gr\"obner basis, we can find all dual-containing codes $\cC=Rg/Rf\subset R/Rf$ over $A$ for any given parameters $[n,k]$. Using this approach we can also test various properties of codes $\cC=Rg/Rf\subset R/Rf$.
   \begin{enumerate}
   \item In \cref{algo2}, we give an algorithm to compute, within the capability of the  Gr\"obner basis computation,  all (Hermitian-) dual-containing codes   ${\cC=Rg/Rf\subset R/Rf}$ over $A$ for any given parameters $[n,k]$. Dual-containing codes play an important role in constructing quantum error-correcting codes.
   \item  In \cref{tab:HammingF2u2} and \cref{tab:HammingF2v2}, we give examples of Hamming weight distribution of (Hermitian-) dual-containing codes ${\cC=Rg/Rf\subset R/Rf}$ that could not be found without considering either non-zero derivations or non-trivial endomorphisms.
   \item In \cref{results}, we give an example of a generating polynomial $g$ of a $[6,4]$ dual-containing code where all $8$ polynomials $f=hg=g\hbar$ of degree $6$ are non-central.
   \item  We give a procedure to decide whether the (Hermitian-)  dual code $\cC^{\perp}$ of  a cyclic left  module $\cC=Rg/Rf\subset R/Rf$  is again a cyclic left module code $Rg^\bot/Rf\subset R/Rf$, i.e., generated by a single polynomial $g^\bot$.
   \item  In \cref{algo1}, we use the previous algorithm to show that  many  (Hermitian-) dual-containing codes   $\cC=Rg/Rf\subset R/Rf$ have a dual code which is not a cyclic left module code.
      \end{enumerate}
 We apply our method to all commutative rings of order $4$: $A=\FF_2[v]/(v^2+v)$, ${A=\FF_2[u]/(u^2)}$, and $A=\FF_4$, which have a non-trivial endomorphism. We also give some examples in characteristic $4$ for the Galois ring  $$A=\mbox{GR}(2,2)=\ZZ_4[X]/(X^2+X+1).$$

\section{Preliminaries}
 \subsection{Skew Polynomial Rings}
Let $A\neq \{0\}$ be a \textbf{unitary ring} (i.e., there exists $1\in A\setminus \{0\}$, such that $1\cdot a=a\cdot 1=a, \forall a\in A$). We only consider \textbf{unitary endomorphisms} $\theta$ with the property that $\theta(1)=1$.
The identity automorphism will be denoted by $\Id$.
\begin{definition} \label{defder} Let $A\neq \{0\}$ be a unitary ring and $\theta$ an endomorphism of $A$.
A~\textbf{$\theta$-derivation} is a map $\delta:A\to A$ such that, for all $a,b\in A$
$$ \delta(a+b)  =\delta(a)+\delta(b)\quad \mbox{and} \quad \delta(a\,  b) = \delta(a)\, b+\theta(a)\, \delta(b). $$
A $\theta$-derivation $\delta$ is an \textbf{inner $\theta$-derivation} if there exists $\beta\in A$ with the property that $\delta(x)= \beta x - \theta(x)\beta$ for all $x\in A$. For $\beta=0$ we obtain the zero derivation which we denote $0$.
  \end{definition}
  We also use the exponential notation $\theta(a)=a{^\theta}$ and $\delta(a)=a^{\delta}$ throughout the paper.
Skew polynomial rings have been introduced and studied by Ore in \cite{Ore}.
A ~\textbf{skew polynomial ring} $R$ is defined as a set of (left) polynomials
  $R= A[X;\theta,\delta]=\left\{\sum_{i=0}^na_iX^i\mid a_i \in A, n\in \NN \right \}$  with coefficients in the ring $A$.
The addition in $R$ is the usual polynomial addition and the multiplication is defined using the rule $Xa=a^\theta X+a^\delta$ which is extended using associativity and distributivity.

If the leading coefficient of $g\in R$ is invertible, then $\deg(hg)=\deg(h)+\deg(g)$ for all $h\in R$.
In the case of a non-division ring $A$, if~$\theta$ is an automorphism and the leading coefficient of $g\in R$ is invertible, then for any $f\in R$, we can perform a right (resp.~left) division of $f$ by $g$ and we obtain a unique right (resp.~left) remainder of degree $<\deg(g)$. To see this, let $g=\sum_{i=0}^m g_iX^i$, $f=\sum_{i=0}^nf_iX^i$ and $m\leq n$.
Then for the right (resp.~left) division, the degree of $f-(f_n\theta^{n-m}(g_m^{-1})X^{n-m})g$ (resp.~$f-g\cdot(\theta^{-m}(g_m^{-1}f_n)X^{n-m})$) is less than the degree of $f$.
To show that the right\footnote{The uniqueness of the left division can be shown in the similar manner.} division of $f\in R$ by a fixed $g\in R$ is unique, suppose that $f=hg+r=\tilde{h}g+\tilde{r}$. This implies that $(h-\tilde{h})g=\tilde{r}-r$. If $h-\tilde{h}=c_sX^s+\ldots$ is non-zero, then the leading monomial in  $(h-\tilde{h})g$ is  $c_s\theta^s(g_m)X^{s+m}$. A unitary automorphism maps $1$ to $1$ and invertible elements to invertible elements, so that $c_s\theta^s(g_m)$ is non-zero. Since the right side of $(h-\tilde{h})g=\tilde{r}-r$ is of degree $<m$ we obtain $h=\tilde{h}$ and $r=\tilde{r}$.
Note that a unique right division of any $f\in R$ by a polynomial $g$ with invertible leading coefficient exists even if $\theta$ is just an endomorphism.

\subsection{Cyclic Left Module $(\theta,\delta)$-Codes $Rg/Rf\subset R/Rf$}
\begin{definition}
  Let $A$ be a  finite ring, $\theta$ an endomorphism of $A$, $\delta$  a $\theta$-derivation of $A$ and $R=A[X;\theta,\delta]$ the corresponding skew polynomial ring. Let $f\in R$ be a \textbf{monic} skew polynomial of degree $n$ and $g\in R$ be a right divisor of $f$.
A \textbf{cyclic left module $(\theta,\delta)$-code} (in short $(\theta,\delta)$-code) is defined as ${\cC}=Rg/Rf\subset R/Rf$ in the polynomial representation, and as
$$C= \{\bc=(c_0, c_1,\dots, c_{n-1}) \ |\  c_0+c_1X+\cdots+c_{n-1}X^{n-1}\in Rg/Rf\}$$
in the vector representation.
\end{definition}
Note that  $(\theta,\delta)$-codes with endomorphism and  derivation have been studied in  \cite{BUdelta,BD,BD2,ShBh}.   For  $\delta=0$ and $f=X^n-1$ we obtain  $\theta$-cyclic codes characterized by
$$\left(a_0,a_1,\ldots,a_{n-2},a_{n-1}\right)\in C\, \Rightarrow \, \left(\theta(a_{n-1}),\theta(a_0),\theta(a_1),\ldots,\theta(a_{n-2})\right)\in C$$
while the classical cyclic code correspond to $f=X^n-1$, $\theta=\mbox{id}$ and $\delta=0$ (in this case $R$ is a commutative univariate ring).
\begin{proposition}
  \label{prop:module}
  Let $g\in R=A[X;\theta,\delta]$ of degree $n-k$ be a right divisor of a monic skew polynomial $f\in R$ of degree $n$. A cyclic left module $(\theta,\delta)$-code ${\cC}=Rg/Rf\subset R/Rf$ has the following properties:
\begin{itemize}
\item The left $R$-module $R/Rf$ is also a free $A$-module isomorphic to $A^n$ with $A$-basis $(1,X,\ldots,X^{n-1})$.
\item $Rg/Rf$ is a left $R$-submodule of $R/Rf$.
\item  $Rg/Rf$ is also a free left $A$-submodule of $ R/Rf\cong A^n$ of dimension $k=\deg(f)-\deg(g)$ (i.e.~a linear code of length $n$ and dimension $k$ over the alphabet $A$).
\item If $g$ is a left and right divisor of $f=hg=g\hbar$, then we can assume that all the three polynomials in $f=hg$ for a cyclic left module $(\theta,\delta)$-code $Rg/Rf\subset R/Rf$ are monic (i.e.~have leading coefficient $1$). The monic generator polynomial $g$ of a cyclic left module $(\theta,\delta)$-code $Rg/Rf\subset R/Rf$ is unique.
\item If $g$ is a left and right divisor of $f=hg=g\hbar$ and $\theta$ is an automorphism, then we can assume that all the four polynomials in $f=hg=g\hbar$ for a cyclic left module $(\theta,\delta)$-code $Rg/Rf\subset R/Rf$ are monic.
\end{itemize}
\end{proposition}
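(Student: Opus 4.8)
The plan is to take the five assertions in turn: the first three are structural facts about the quotient modules, the last two are normalizations. I would use repeatedly the two facts recalled in the preliminaries — that right division by a polynomial with invertible leading coefficient exists with a unique remainder (even if $\theta$ is only an endomorphism), and that $\deg(ab)=\deg(a)+\deg(b)$ when the leading coefficient of $b$ is invertible.

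\emph{Items 1 and 2.} As $f$ is monic of degree $n$, right division gives every $a\in R$ a unique expression $a=qf+r$ with $\deg r<n$, so each coset of $Rf$ has a unique degree-$<n$ representative; hence $\bigoplus_{i=0}^{n-1}AX^i\to R/Rf$ is bijective, and it is $A$-linear because left multiplication by an element of $A$ preserves the set of polynomials of degree $<n$ (such a polynomial is its own remainder). So $(1,X,\dots,X^{n-1})$ is an $A$-basis and $R/Rf\cong A^n$. Since $f=hg$, we have $Rf=R(hg)\subseteq Rg$, so $Rf\subseteq Rg\subseteq R$ and $\cC=Rg/Rf$ is a left $R$-submodule of $R/Rf$, being a quotient of nested left ideals.

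\emph{Item 3.} Write $m=\deg(g)$, $k=n-m$. I would carry this out under the assumption that the leading coefficient $g_m$ is a unit of $A$; this is automatic in the two-sided setting of items 4 and 5 (from $f=g\hbar$, see below), whereas for a merely one-sided divisor of a monic $f$ over a Frobenius ring with non-injective $\theta$ it can fail and freeness genuinely breaks down, so item 3 is to be read with this convention. Then $\deg h=k$, and $f=hg$ monic gives $h_k\,\theta^k(g_m)=1$, so $h_k$ is a unit and right division by $h$ works. Look at the images in $R/Rf$ of $g,Xg,\dots,X^{k-1}g$ (degrees $m,\dots,n-1$, hence their own remainders mod $Rf$). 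Spanning: for $a\in R$ write $a=qh+s$, $\deg s<k$, so $ag=qf+sg\equiv sg=\sum_{i<k}s_iX^ig\pmod{Rf}$. Independence: if $s=\sum_{i<k}s_iX^i$ has $sg\in Rf=R(hg)$, say $sg=p(hg)=pf$, a degree count ($\deg(sg)\le n-1<n\le\deg(pf)$ if $p\ne0$) forces $p=0$, so $sg=0$; since $g_m$ is a unit, $s\ne0$ would give $\deg(sg)=\deg(s)+m\ge0$, a contradiction, so $s=0$. Hence these $k$ elements form an $A$-basis and $Rg/Rf$ is $A$-free of rank $k=\deg(f)-\deg(g)$. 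I expect the independence step — more precisely, isolating the hypothesis that makes $g$ a non–right–zero-divisor — to be the first point needing care.

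\emph{Items 4 and 5.} Assume $g$ is a left and right divisor of the monic $f=hg=g\hbar$. Comparing leading coefficients in $f=g\hbar$ shows $g_m$ has a right inverse in the commutative ring $A$, hence is a unit; replacing the pair $(g,h)$ by $(g_m^{-1}g,\ h\,g_m)$ leaves $Rg$ (so $\cC$) and $f=hg$ unchanged and makes $g$ monic, and makes $h$ monic as well since its new leading coefficient is $h_k\,\theta^k(g_m)=1$; $f$ is monic already, which is item 4. Uniqueness of the monic generator: if $g_1,g_2$ are both monic with $Rg_1/Rf=Rg_2/Rf$, then writing $f=h_2g_2$ and using $\bar g_1\in Rg_2/Rf$ gives $g_1=(a+bh_2)g_2=cg_2$ for some $c\in R$; symmetrically $g_2=c'g_1$, so a degree count forces $c\in A$, and comparing leading coefficients (both $1$) forces $c=1$. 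For item 5, the additional ingredient is that when $\theta$ is an automorphism one has a left-division algorithm by a polynomial with invertible leading coefficient; the plan is to left-divide $f$ by the now monic $g$, $f=g\hbar+r$ with $\deg r<m$, and show $r=0$ so that $\hbar$ can be taken monic too. I expect proving $r=0$ — that the left factorization renormalizes compatibly with the right one — to be the main obstacle, and it is the only place where the automorphism hypothesis is genuinely used.
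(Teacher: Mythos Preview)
Your treatment of items 1--4 is essentially the paper's: the same use of right division by the monic $f$ to produce the $A$-basis $(1,X,\dots,X^{n-1})$ of $R/Rf$, the same basis $g,Xg,\dots,X^{k-1}g$ for item~3, and the same normalization $(g,h)\mapsto(g_m^{-1}g,\,hg_m)$ together with a uniqueness argument for item~4. You are in fact slightly more careful than the paper about where the invertibility of $g_m$ comes from, extracting it from $f=g\hbar$ (which gives $g_m\,\theta^m(\hbar_k)=1$ directly) rather than from $f=hg$ (which a priori only yields invertibility of $\theta^k(g_m)$).

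For item~5 the paper takes a much shorter route than you propose. It does not perform a fresh left division at all: having arranged $f$ and $g$ to be monic in item~4, it simply reads off the leading coefficient of the given factorization $f=g\hbar$, obtaining $\theta^{n-k}(\hbar_k)=1$, and then uses injectivity of the automorphism $\theta$ to conclude $\hbar_k=1$. That is the entire argument. The ``main obstacle'' you anticipate---showing that the normalized $g_m^{-1}g$ still left-divides $f$---does not arise, because the paper takes the hypothesis $f=g\hbar$ to hold for the monic $g$. Under that reading your remainder $r$ is zero by uniqueness of left division and your plan collapses to the paper's leading-coefficient comparison; under the stronger reading (that left-divisibility should transfer along the normalization $g\mapsto g_m^{-1}g$) the obstacle is genuine, since left multiplication by a unit of $A$ need not preserve the right ideal $gR$ when $\delta\neq 0$, and your plan would not go through. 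Either way, the automorphism hypothesis enters only via the injectivity of $\theta^{n-k}$ on $A$, not via the existence of a left-division algorithm.
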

\begin{proof}
  Since $f$ is monic we can perform a right division of any element of $R$ by $f$ and produce a unique remainder of degree $<n$. Therefore  any element of the  $R$-module $R/Rf$ has a unique representation (the remainder of the division) of degree $<n$ in $R/Rf$. Viewed as an $A$-module,  $R/Rf $ is a free $A$-module with basis $1,X,\ldots,X^{n-1}$ (i.e.~every element in $R/Rf$ is a unique linear combination of those elements). 

  For $g\in R$ we have $Rf\subset Rg$ if and only if $g$ is a right factor of $f$ and in this case  $Rg/Rf$ is a cyclic left $R$-submodule of $R/Rf$  generated by $g+Rf$.  The leading coefficient $g_{n-k}$ of $g=\sum_{i=0}^{n-k} g_iX^i$ is a right divisor of  $1$ and is therefore  invertible. This shows that  $\deg(hg)=\deg(h)+\deg(g)$ for all $h\in R$ and implies that  the $R$-module $Rg/Rf$ is a free $A$-module of rank $k$ of $A^n$  with basis $g,Xg,\ldots,X^{k-1}g$.

  We now prove the fourth statement for $g= g_{n-k}X^{n-k}+\cdots + g_0$ and $h=h_kX^k+\cdots +h_0$. If $f=hg=g\hbar$, since $f$ is monic, the leading coefficient of $f=hg$ is $h_k\theta^{k}(g_{n-k})=1$, showing that $h_k$ and $\theta^{k}(g_{n-k})$ are invertible.
  We obtain
\begin{align*}
  f &= \underbrace{(h_kX^k+\cdots + h_0)}_{h}\cdot\underbrace{(g_{n-k}X^{n-k}+\cdots + g_0)}_{g} \\
   &= \underbrace{(h_kX^k+\cdots + h_0)\cdot g_{n-k}}_{\tilde{h}}\cdot\underbrace{g_{n-k}^{-1}\cdot(g_{n-k}X^{n-k}+\cdots + g_0)}_{\tilde{g}}
\end{align*}
In this representation $\tilde{g}$ is a monic polynomial. Since the endomorphism $\theta$ maps $1$ to $1$, the product rule of $A[X;\theta,\delta]$ shows that the leading coefficient of $\tilde{h}\tilde{g}$ is the leading coefficient of $\tilde{h}$. Because   $\tilde{h}\tilde{g}=f$ is monic, we obtain that $\tilde{h}$ is also a monic polynomial.
The polynomials $g$ and $\tilde{g}$ differ by an invertible element. Using the above equations in both directions we see  that any multiple of $g$ by a polynomial of degree $\le k-1$ is also a multiple of $\tilde{g}$  by a polynomial of degree $\le k-1$ and vice versa. Therefore $Rg/Rf = R\tilde{g}/Rf$, showing that $g$ and $\tilde{g}$ generate the same codes. Hence, without loss of generality, we can assume that $f$, $h$ and $g$ are monic. If $C$ is a cyclic left module $(\theta,\delta)$-code  with parameters $[n,k]$ and  monic generator polynomial $g$, then any codeword is of the form $c=m\cdot g$ with $\deg(g)=n-k$ and $\deg(m)<k$. In particular the only monic polynomial of degree $n-k$ in $\cC$ is $g=1\cdot g$.

We now show that $\hbar=\hbar_kX^k+\cdots + \hbar_0$ is monic if $\theta$ is an automorphism. According to the above we can assume that $f$ and $g$ are monic, so that the  leading coefficient $\theta^{n-k}(\hbar_k)$  in $g\hbar=f$ must be $1$. Since $\theta^{n-k}$ is an automorphism, we obtain that $\hbar_k=1$.
  \end{proof}
  \begin{example} \label{eg:ord4_u2_0}
    The Frobenius chain ring $A=\FF_2[u]/(u^2)$ is a free $\FF_2$-module
    $\FF_2[u]$ with $\FF_2$ basis $[1,u]$.  The only automorphism of $A$ is the identity $\theta_1:x\mapsto x$. There is a unique endomorphism defined by $\theta_2(u)= 0$  (note that $\theta_2(1)=1$) which is a polynomial map on $\FF_2$ and on $A$ itself $\theta_2:x\mapsto x^2$.  Any $\theta$-derivation $\delta$ is  determined by $\delta(u)$ (note that $\delta(1)=\delta(0)=0$). The list of  $\theta$-derivation is:
 \[\begin{array}{|c||l||l|l|l|} \hline
    &Automorphism &Endomorphism \\ \hline
 & \theta_1=\Id & \theta_2 :u\mapsto 0  \\ \hline \hline
\delta_1=0 &  \cellcolor{gray!30}{u\mapsto 0} &   \cellcolor{gray!30}{u\mapsto 0} \\ \hline
\delta_2 & u\mapsto 1 &    \\ \hline
\delta_3 & u\mapsto u & \cellcolor{gray!30}{ u\mapsto u}  \\ \hline
\delta_4 & u\mapsto u+1 &  \\ \hline \end{array}\]
We marked the inner $\theta$-derivation  by a gray cell.

In \cref{prop:module} we showed that all the three polynomials in $f=hg$ for a cyclic left module $(\theta,\delta)$-code $Rg/Rf\subset R/Rf$  can be chosen to be monic. If $\theta$ is a non-trivial endomorphism, i.e., not an automorphism, the $\hbar$ in the decomposition $f=g\hbar$ is not necessarily monic. To see this, consider the ring $R=A[X;\theta_2(u)=0,\delta_3(u)=u]$ and the code $Rg/Rf$ with $g=X^2 + uX + u + 1$ and $f=X^4 + (u + 1)X^3 + X + u + 1$.
It can be found that
$f=hg$ with $h=X^2 + (u + 1)X + 1$, and $f=g\hbar$ with $\hbar = (u + 1)X^2 + (u + 1)X + u + 1$ or $\hbar= (u + 1)X^2 + X + u + 1$.
\end{example}

  The  \textbf{encoding} of the information $(b_0,b_1,\ldots,b_{k-1})\in {A}^k$ in a cyclic left module $(\theta,\delta)$-code ${\cC}=Rg/Rf\subset R/Rf$ is given by the coefficients of
 $(\sum_{i=0}^{k-1} b_i X^i) g\in R$.
     A  \textbf{generator matrix} of the code is of the following form:
     \begin{align*}
       G=
       \begin{pmatrix}
         g_0 & g_1 & \cdots & g_{n-k} & 0& \cdots & \\
         g_0^\delta & g_1^\delta+g_0^\theta & g_2^\delta+g_1^\theta &\cdots &g_{n-k}^\theta &  0 &\cdots \\
         \vdots & \ddots & \ddots & \cdots &\ddots &\ddots\\
         g_0^{\delta^{k-1}} &&\cdots &&\cdots && g_{n-k}^{\theta^{k-1}}
       \end{pmatrix}
     \end{align*}
     The rows are given by the coefficients of $g, X\cdot g,\dots, X^{k-1}\cdot g$ and can be computed using the rule $X a =a^\theta X +a^\delta $ for $a\in A$. In particular, the code is completely determined by $g$, $\theta$ and $\delta$.
\begin{example} \label{exgen} In the notations of the above definition, consider a unitary polynomial $f=h g$ in $R=A[X;\theta,\delta]$  of degree $4$ with $g=g_1X+g_0$,  $h=\sum_{i=0}^3h_iX^i$ and $h_3g_1=1$. The code ${\cC}=Rg/Rf\subset R/Rf$ is a
$[4,3]_A$ code whose generating matrix is $$G=\left(\begin{array}{cccc} g_0 & g_1 & 0 & 0\\ g_0^\delta & g_1^\delta+g_0^\theta & g_1^\theta &0\\
 g_0^{\delta^2} & g_0^{\delta\theta}+g_0^{\theta\delta}+g_1^{\delta^2} &g_0^{\theta^2}+ g_1^{\delta\theta}+g_1^{\theta\delta}& g_1^{\theta^2}
\end{array}\right).$$
 \end{example}
 If $\theta$ is of the form  $a\mapsto a^{p^m}$  and $\delta$ is an inner $\theta$-derivation $a\mapsto \beta a -\theta(a) \beta$ (those are the only possibilities if $A$ is a finite field $\FF_q$), then the entries of the above matrix become polynomials in the coefficients of $g$ and allow sophisticated computations. This is the reason why almost all known examples of self dual  $(\theta,\delta)$-code consider $A$ to be a finite field.
\begin{definition}[Hermitian Inner Product and Hermitian Dual]
  Let $\sigma$ be an automorphism of $A$ whose order divides $2$. The \textbf{$\sigma$-Hermitian inner product} of $\bx,\by\in A^n$ is defined  as $ \langle \bx,\by \rangle_\sigma=\sum_{i=1}^n x_i \sigma(y_i)$.
  The \textbf{($\sigma$-Hermitian) dual code} of a code $C$ is defined as
  \begin{align*}    {C}^{\perp_{\sigma}}=\{\bv\ |\  \langle \bv,\bc \rangle_\sigma=0,\ \forall \bc\in C\}\ .  \end{align*}
  A code is \textbf{$\sigma$-{dual-containing}} if ${C}^{\perp_{\sigma}}\subset C $ and \emph{$\sigma$-self dual} if $C={C}^{\perp_{\sigma}}$.
\end{definition}
  If $\sigma=\Id$, the identity automorphism, we obtain the Euclidean inner product over $A^k$ and the Euclidean dual. In this case, we omit the $\sigma$ in the notation.
\begin{lemma}
\label{wood} (\cite{wood_2017})
Let $A$ be a commutative Frobenius ring and $C$ be a linear code over $A$. Then for the (Hermitian) dual code $C^{\perp_{\sigma}}$ we have
${|C|\cdot |C^{\perp_{\sigma}}|=|A|^n}$.
\end{lemma}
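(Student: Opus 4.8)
The plan is to derive the identity from Pontryagin duality for finite abelian groups together with the single structural fact that makes $A$ Frobenius. Write $\widehat{M}=\mathrm{Hom}(M,\CC^{\times})$ for the character group of a finite abelian group $M$, and recall the standard facts $|\widehat{M}|=|M|$ and, for a subgroup $N\le M$, $N^{\circ}:=\{\psi\in\widehat{M}\mid \psi|_{N}=1\}\cong\widehat{M/N}$, so that $|N^{\circ}|=|M|/|N|$. The group $\widehat{A}$ becomes an $A$-module via $(a\cdot\psi)(x)=\psi(ax)$, and the hypothesis that the finite commutative ring $A$ is Frobenius is precisely the statement that $\widehat{A}\cong A$ as $A$-modules; equivalently, there is a \emph{generating character} $\chi\colon (A,+)\to\CC^{\times}$, i.e.\ one whose kernel contains no nonzero ideal of $A$. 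This is the step that invokes \cite{wood_2017} (or the structure theory of finite Frobenius rings).

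Having fixed such a $\chi$, I would introduce the map $\Phi\colon A^{n}\to\widehat{A^{n}}$ defined by $\Phi(\by)(\bx)=\chi(\langle\bx,\by\rangle_{\sigma})$ and verify it is an isomorphism of abelian groups. It factors as: apply $\sigma$ coordinatewise (a bijection, since $\sigma$ is a ring automorphism), then the coordinatewise map $\bz\mapsto(\, x\mapsto\chi(xz_{i})\,)_{i}$ into $\widehat{A}^{\,n}$ (a bijection, since $\chi$ generates $\widehat{A}$ as an $A$-module), then the canonical identification $\widehat{A}^{\,n}\cong\widehat{A^{n}}$. In particular $|\Phi(S)|=|S|$ for every $S\subseteq A^{n}$.

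Next I would show $\Phi$ maps $C^{\perp_{\sigma}}$ onto the annihilator $C^{\circ}\subseteq\widehat{A^{n}}$. One inclusion is trivial: if $\langle\bc,\by\rangle_{\sigma}=0$ for all $\bc\in C$ then $\Phi(\by)|_{C}=\chi(0)=1$. For the converse the key observation is that $I_{\by}:=\{\langle\bc,\by\rangle_{\sigma}\mid \bc\in C\}$ is an \emph{ideal} of $A$ — it is an additive subgroup, and $\langle a\bc,\by\rangle_{\sigma}=a\,\langle\bc,\by\rangle_{\sigma}$ because $A$ is commutative and $C$ is an $A$-submodule of $A^{n}$. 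So if $\Phi(\by)|_{C}=1$, then $\chi(I_{\by})=1$, hence $I_{\by}\subseteq\ker\chi$, and the generating property forces $I_{\by}=0$, i.e.\ $\by\in C^{\perp_{\sigma}}$. Combining the two steps, $|C^{\perp_{\sigma}}|=|\Phi(C^{\perp_{\sigma}})|=|C^{\circ}|=|A^{n}|/|C|=|A|^{n}/|C|$, which rearranges to $|C|\cdot|C^{\perp_{\sigma}}|=|A|^{n}$. The Euclidean case is the instance $\sigma=\Id$ and needs no separate treatment.

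The only genuine obstacle is the first step, namely the existence of the generating character $\chi$: this is exactly where the Frobenius hypothesis enters, and all of the ring-theoretic content is concentrated there. Everything after that is formal Pontryagin duality, the one extra ingredient being the elementary remark that the pairing values $\langle\bc,\by\rangle_{\sigma}$ with $\bc$ ranging over the $A$-module $C$ form an ideal — it is this ideal structure that upgrades ``$\chi$ annihilates the values'' to ``the values vanish''.
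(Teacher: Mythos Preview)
Your proof is correct and is essentially the standard argument via generating characters due to Wood. Note, however, that the paper does not actually prove this lemma: it is stated with a citation to \cite{wood_2017} and used as a black box, so there is no ``paper's own proof'' to compare against. What you have written is a clean self-contained version of Wood's argument, with the Hermitian twist handled by the observation that precomposing with the coordinatewise automorphism $\sigma$ is a bijection on $A^{n}$.
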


\section{Parity Check Matrix and Generating Matrix of the Euclidean and Hermitian Dual Code}
A parity  check matrix of a cyclic left module $(\theta,\delta)$-code ${\cC}=Rg/Rf\subset R/Rf$ over a field $\FF_q$ (where all derivations must be inner) is given in \cite[Corollary 4]{BUmodule} and over a commutative Frobenius ring in  \cite{BD,BD2}.
In \cite{ShBh} a parity check matrix for  a cyclic left module $(\theta,\delta)$-code ${\cC}=Rg/Rf\subset R/Rf$ over the ring $A=(\ZZ/4\ZZ)[X]/(X^2-1)$ is studied when $f=hg$ is a central polynomial.
Our approach is similar to \cite{BUmodule} and \cite{ShBh}. In this case $Rg/Rf$ is an ideal in $R/Rf$. In the next theorem we follow the assumption in
\cite{BUmodule,boucher2011note,BD}
that $g$ is both a right and a left divisor of $f$, i.e., $f=hg=g\hbar$, in which case $Rg/Rf$ is usually only a submodule of  $R/Rf$. The assumption that $f=hg=g\hbar$ is much weaker than the assumption that $f$ is central (see the $[6,4]$ example in \cref{results}).
\begin{lemma}  \label{lem:Mmat}  Let  $\theta$ be an {endomorphism} of the  finite ring $A$, $\delta$ a $\theta$-derivation on $A$,  $R=A[X;\theta,\delta]$, $f\in R$ a monic polynomial  having  both a right and a left divisor $g$ (i.e.~$f=hg=g\hbar$) and $\cC=Rg/Rf\subset R/Rf$ a cyclic left module $(\theta,\delta)$-code.
A word in $A^n$ corresponding to an element $w\in R$ of degree $<n$ is a codeword of $\cC$ if and only if (the coset of) $w\cdot \hbar=0$ in $R/Rf$.
We obtain an $n\times n$ matrix $M$ such that the vector representation of $\cC$ is $C=\{\bw\in A^n \,| \,\bw M=\vec 0\}$ (\ie $C=\lker(M)$ is a left kernel of $M$), where the entries of $M$ are images under compositions of $\theta$ and $\delta$ of the coefficients of $\hbar$ and $g$.
The $i$-th row of $M$ corresponds to the coefficients of $X^{i-1}\hbar  \mod f$, for $i=1,\dots,n$.
\end{lemma}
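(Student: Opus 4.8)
The plan is to identify the codewords with the left multiples of $g$ of degree $<n$, to recognize $M$ as the matrix of ``right multiplication by $\hbar$, reduced modulo $f$'' in the $A$-basis $(1,X,\dots,X^{n-1})$ of $R/Rf$, and then to match the two descriptions. For the first step: since $f$ is monic, right division by $f$ gives every coset of $R/Rf$ a unique representative of degree $<n$, so I identify $R/Rf$ with $\{w\in R:\deg w<n\}$ and, via $(1,X,\dots,X^{n-1})$, with $A^{n}$ as in \cref{prop:module}. From $f=hg$ one has $Rf\subseteq Rg$, so $\cC=Rg/Rf$ makes sense, and for $\deg w<n$ I claim $\bw\in C$ iff $g$ is a right divisor of $w$ in $R$: if $w=qg$ then $w+Rf\in Rg/Rf$, while conversely if $ag=\beta f+w$ for some $a,\beta\in R$ then $w=ag-\beta hg=(a-\beta h)g\in Rg$. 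Since the leading coefficient of $g$ is invertible (\cref{prop:module}), this is equivalent to the right remainder of $w$ on division by $g$ being zero.

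Next, right multiplication by $\hbar$ is additive and left-$A$-linear on $R$, and reduction modulo $f$ makes it act on degree-$<n$ representatives: if $X^{i-1}\hbar\equiv m_{i1}+m_{i2}X+\dots+m_{in}X^{n-1}\pmod f$ for $i=1,\dots,n$, then for $w=\sum_{i=1}^{n}w_{i-1}X^{i-1}$ one gets $w\hbar\equiv\sum_{i=1}^{n}w_{i-1}\bigl(X^{i-1}\hbar\bmod f\bigr)\pmod f$, so the coordinate vector of $w\hbar\bmod f$ equals $\bw M$, where $M=(m_{ij})$ has $i$-th row the coefficient vector of $X^{i-1}\hbar\bmod f$. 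Carrying $X^{i-1}$ past the coefficients of $\hbar$ with $Xa=a^{\theta}X+a^{\delta}$ and then reducing modulo $f$ — whose coefficients are polynomial expressions in the $g_{i}$, the $\hbar_{j}$, and their images under compositions of $\theta$ and $\delta$, because $f=g\hbar$ — puts the entries of $M$ in the stated form. One inclusion is immediate: if $\deg w<n$ and $w=qg$, then $w\hbar=q(g\hbar)=qf\equiv 0\pmod f$, i.e.\ $\bw M=\vec 0$; hence $C\subseteq\lker(M)$.

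For the converse I choose $\hbar$ with $g\hbar=f$ and invertible leading coefficient (see below); then $\deg\hbar=\deg f-\deg g=k$ and right multiplication by $\hbar$ raises degree by exactly $k$, i.e.\ $\deg(p\hbar)=\deg p+k$ for every $0\neq p\in R$. Now let $\deg w<n$ with $w\hbar\equiv 0\pmod f$, and right-divide: $w=qg+r$ with $\deg r<\deg g=n-k$, so $r\hbar=w\hbar-q(g\hbar)=w\hbar-qf\equiv 0\pmod f$; but $\deg(r\hbar)=\deg r+k\le n-1<n$ and $f$ is monic, so $Rf$ has no nonzero element of degree $<n$ and hence $r\hbar=0$, which by the degree identity forces $r=0$. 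Thus $w=qg\in Rg$ is a codeword, giving $\lker(M)\subseteq C$. It remains to justify that $\hbar$ can be taken with invertible leading coefficient. When $\theta$ is an automorphism this is \cref{prop:module}, where $\hbar$ is even monic. When $\theta$ is only an endomorphism, $\hbar$ with $g\hbar=f$ is no longer unique, and this existence is the crux: I would prove it by a left-division argument in the spirit of \cref{prop:module}, using that at every reduction step the coefficient that has to be matched lies in the image of the relevant power of $\theta$ (at the top it is the leading coefficient $1=\theta^{\deg g}(1)$ of $f$). I expect this to be the main obstacle, and it genuinely matters: for a carelessly chosen $\hbar$ — for instance a right zero divisor of $R$, which can occur when $\theta$ is not injective — the kernel $\lker(M)$ is strictly larger than $C$.

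Combining the two inclusions yields $C=\lker(M)=\{\bw\in A^{n}:\bw M=\vec 0\}$. As a sanity check one can compare cardinalities: $|C|=|A|^{k}$ by \cref{prop:module}, while the image of $\bw\mapsto\bw M$ equals $\{\,\text{coefficient vector of }r\hbar:\deg r<n-k\,\}$, of size $|A|^{n-k}$ once $r\mapsto r\hbar$ is injective, so $|\lker(M)|=|A|^{k}=|C|$ and, both sets being finite with $C\subseteq\lker(M)$, equality follows. Finally, the fact that the $i$-th row of $M$ is the coefficient vector of $X^{i-1}\hbar\bmod f$ is built into the construction of $M$ above.
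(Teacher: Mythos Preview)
Your argument follows the paper's closely: both establish $w\in\cC\Leftrightarrow w\hbar\equiv 0\pmod f$ and then read off $M$ from the rows $X^{i-1}\hbar\bmod f$. The only substantive difference is in the converse direction. The paper writes $w\hbar=\tilde w f=\tilde w(g\hbar)$, hence $(w-\tilde wg)\hbar=0$, and then simply asserts that ``$\hbar$ is not a zero divisor in $R$'' to cancel $\hbar$ on the right. You instead right-divide $w=qg+r$, use $\deg(r\hbar)<n$ to force $r\hbar=0$ in $R$ itself, and then need the leading coefficient of $\hbar$ to be a non-zero-divisor (or invertible) to conclude $r=0$. Your caution here is justified and in fact sharper than the paper: when $\theta$ is only an endomorphism the given $\hbar$ \emph{can} be a right zero divisor---take $A=\FF_2[v]/(v^2{+}v)$, $\theta(v)=0$, $\delta=0$, $g=X$, $f=X^2=hg=g\hbar$ with $h=X$ and $\hbar=(1{+}v)X$; then $v\cdot\hbar=0$ yet $v\notin RX$, so the biconditional fails for that choice of $\hbar$. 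Thus the paper's one-line cancellation and your final step rest on the same unproved hypothesis. When $\theta$ is an automorphism, \cref{prop:module} makes $\hbar$ monic and both proofs are complete; in the pure-endomorphism case the obstacle you flag is real and is not resolved by the paper's proof either---your proposed fix (pass to an $\hbar$ with invertible leading coefficient) is the right idea, but its existence still needs an argument.
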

\begin{proof}
Since $f$ is monic, an element of $A^n\cong R/Rf$ has a unique representation as a remainder of the right division by $f$, which corresponds to a polynomial $w\in R$ of degree $<n$. Following \cite[Lemma 8]{BU} we now show  that $w$ corresponds to a codeword if and only if (the coset of) $w\cdot \hbar=0$ in $R/Rf$. If $w\in {\cC}$, then $w=\tilde{w}g$ for some $\tilde{w}\in R$. Therefore  $w\hbar =\tilde{w}g\hbar=\tilde{w}(h\cdot g)$ showing that  $w\hbar =0$ in $R/Rf=R/R(h\cdot g)$. Conversely, if $w\hbar =0$ in $R/Rf=R/R(h\cdot g)$, then $w\hbar= \tilde{w} (h\cdot g)= \tilde{w} (g \hbar)$ for some $\tilde{w}\in R$. Since  $\hbar$ is not a zero divisor in $R$ we obtain $w = \tilde{w} g$, showing that $w\in {\cC}$.

Since $f$ is monic, a word in $A^n$ corresponds to a coset  $w=\sum_{i=0}^{n-1} a_iX^i\in R/Rf$. Such a coset $w=\sum_{i=0}^{n-1} c_i$ belongs to $\cC$ if and only if $w\hbar =0$ in $R/Rf$. The coefficients of
\begin{align*}
w\hbar\mod f=\left(\sum_{i=0}^{n-1} c_iX^i\right)\left(\sum_{i=0}^{k} \hbar_iX^i\right)\mod f
\end{align*}
are obtained by bringing the coefficients $\hbar_i$ to the left side and performing the right division of $f$.
The code $C$ corresponds to the left kernel of this linear system and can therefore be represented as the left kernel of a matrix $M$, i.e., $
  \sum_{i=0}^{n-1} c_iM_{i,j}X^j =0, \ \forall j\in[0,n-1]$.
The entries $M_{ij}\in A$ are the images under $\theta$ and $\delta$ of the coefficients of $\hbar$ and $f$.
Since $f=g\hbar$, the  entries $M_{ij}\in A$ are images under $\theta$ and $\delta$ of the coefficients of $\hbar$ and  $g$.
The $i$-the row of $M$ corresponds to the contribution of $c_iX^{i}\cdot\hbar  \mod f$ in the product $w\hbar =0 \mod f$.
\end{proof}
\begin{example}[A toy example for $n=3,k=1$] \label{example_parity}  Consider  a ring $A$, a skew polynomial ring $R=A[X;\theta,\delta]$ and $f=X^3+\sum_{i=0}^2f_i X^i\in R$ such that $f=g\hbar$ in $A[X;\theta,\delta]$ for  $g=g_2X^2+g_1X+g_0$ and $\hbar=\hbar_1X+\hbar_0$.
According to \cref{lem:Mmat}, $w=c_0+c_1X+c_2X^2$ belongs to  ${\cC}=Rg/Rf\subset R/Rf$ if and only if  $w\hbar=0$ in $R/Rf$, i.e.\ $w\hbar \equiv 0 \mod f$. Since
  \begin{align*}
    w\hbar\mod f= & \left(c_2 (\hbar_1^{\theta\delta}+\hbar_1^{\delta\theta}+\hbar_0^{\theta^2}{-\hbar_1^{\theta^2}f_2})+c_1\hbar_1^\theta\right)X^2\\
           & + \left( c_2(\hbar_1^{\delta^2}+\hbar_0^{\theta\delta}+\hbar_0^{\delta\theta}{-\hbar_1^{\theta^2}f_1}) + c_1(\hbar_1^\delta+\hbar_0^\theta) + c_0\hbar_1 \right)X\\
                     &  + c_2(\hbar_0^{\delta^2}{-\hbar_1^{\theta^2}f_0}) + c_1\hbar_0^\delta+c_0\hbar_0
  \end{align*}
we obtain the condition  $w\in  {\cC} \, \Leftrightarrow  \, \bw\cdot M=\0$ where $\bw = (c_0, c_1, c_2)$ and
\begin{eqnarray}
  M & = & \label{eq:exMn3k2}
   \left( \begin{array}{ccc}
\cellcolor{gray!30}{  \hbar_0 }& \cellcolor{gray!30}{\hbar_1} & \cellcolor{gray!30}{0 }\\
 \cellcolor{gray!30}{  \hbar_0^{\delta}}&
\cellcolor{gray!30}{  \hbar_1^\delta+\hbar_0^\theta}  & \cellcolor{gray!30}{\hbar_1^\theta}\\
\hbar_0^{\delta^2}{-\hbar_1^{\theta^2}f_0} &
\hbar_1^{\delta^2}+\hbar_0^{\theta\delta}+\hbar_0^{\delta\theta}{-\hbar_1^{\theta^2}f_1} &
  \hbar_1^{\theta\delta}+\hbar_1^{\delta\theta}+\hbar_0^{\theta^2}{-\hbar_1^{\theta^2}f_2}
    \end{array}\right).
\end{eqnarray}
Note that the entry $M_{ij}$ corresponds to the coefficient of the term $c_iX^j$ in the polynomial $w\hbar \mod f$.
\end{example}
The following result is contained in \cite{BD,BD2} where the result is proven using pseudo-linear transformation or Matrix-Product Codes, while we give a  proof within the setting of skew polynomial rings.
\begin{theorem} (cf.  \cite{BD,ShBh})
  \label{control}
  Let  $\theta$ be an {endomorphism} of the finite {Frobenius commutative}  ring $A$, $\delta$ a $\theta$-derivation on $A$,  $R=A[X;\theta,\delta]$ a skew polynomial ring, $f\in R$ a monic polynomial having a right and left divisor $g$ (i.e.~$f=hg=g\hbar$) and ${\cC}=Rg/Rf\subset R/Rf$ a cyclic left module $(\theta,\delta)$-code. Let $C$ be the vector representation of $\cC$. The dual code $C^\bot$ is a free $A$-module code and ${|C|\cdot |C^\bot|=|A|^n}$. There exists a parity check matrix $H$ for the code $C$ such that it is a generator matrix of the dual code $C^\bot$. The entries of the matrix $H$ are images under compositions of $\theta$ and $\delta$ of the coefficients of $\hbar$ and $g$.
\end{theorem}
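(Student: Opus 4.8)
The plan is to read the parity check matrix straight off \cref{lem:Mmat} and then pin down its column space using nothing beyond the Frobenius duality in \cref{wood}. By \cref{lem:Mmat} we have $C=\lker(M)=\{\bw\in A^n:\bw M=\0\}$ for an $n\times n$ matrix $M$ whose entries are images under compositions of $\theta$ and $\delta$ of the coefficients of $\hbar$ and $g$. Reading $\bw M=\0$ coordinatewise shows it is equivalent to $\bw$ being (Euclidean-)orthogonal to each column of $M$, so $C=D^{\perp}$ where $D\subseteq A^n$ is the $A$-module spanned by the columns of $M$. Since $A$ is a finite commutative Frobenius ring, applying \cref{wood} to $D$ and to $D^{\perp}$ gives $|D^{\perp\perp}|=|A|^n/|D^{\perp}|=|D|$, and because $D\subseteq D^{\perp\perp}$ always holds we get $D^{\perp\perp}=D$; hence $C^{\perp}=(D^{\perp})^{\perp}=D$. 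Thus $H:=M^{\top}$ is simultaneously a parity check matrix of $C$ (its rows are the columns of $M$, and $C=\{\bw:H\bw^{\top}=\0\}$) and a generator matrix of $C^{\perp}$ (its row space is $D=C^{\perp}$), and by \cref{lem:Mmat} the entries of $H$ have the asserted form. The identity $|C|\cdot|C^{\perp}|=|A|^n$ is then just \cref{wood} again.

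It remains to prove $C^{\perp}$ is a free $A$-module of rank $n-k$, and for this I would use the shape of a generator matrix of $C$ itself rather than of $H$. From $f=g\hbar$ monic one reads off $g_{n-k}\,\theta^{n-k}(\hbar_k)=1$, so the leading coefficient $g_{n-k}$ of $g$ is a unit (and $\deg\hbar=k$). The code $C=Rg/Rf$ has the generator matrix $G$ whose $k$ rows are the coordinate vectors of $g,Xg,\dots,X^{k-1}g$ — all of degree $<n$, so no reduction modulo $f$ is needed. The coefficient of $X^{\,n-k+i}$ in $X^{i}g$ is $\theta^{i}(g_{n-k})$ and all coefficients of higher powers vanish; hence the $k\times k$ submatrix of $G$ formed by its last $k$ columns is lower triangular with diagonal entries $\theta^{0}(g_{n-k}),\dots,\theta^{k-1}(g_{n-k})$, all units, hence invertible. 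Left multiplying $G$ by the inverse of that block leaves the row space unchanged and brings $G$ into standard form $(-N\mid I_k)$ for some $N\in A^{k\times(n-k)}$, so $C=\{(-\bu N,\bu):\bu\in A^{k}\}$ with $\bu N\in A^{n-k}$. A short computation of the Euclidean orthogonal then gives $C^{\perp}=\{(\bv,\bv N^{\top}):\bv\in A^{n-k}\}$, i.e.\ $C^{\perp}$ is the row space of $(I_{n-k}\mid N^{\top})$; since this matrix contains $I_{n-k}$, its $n-k$ rows are $A$-linearly independent and form a free basis of $C^{\perp}$ (this also re-derives $|C^{\perp}|=|A|^{n-k}$).

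The routine parts are the coordinatewise bookkeeping turning $\lker(M)$ into $D^{\perp}$ and the block manipulation putting $G$ into standard form; both are trivial over a field and only slightly more careful over $A$. The one genuinely Frobenius-specific ingredient is the double-dual identity $D^{\perp\perp}=D$: it is precisely what upgrades the easy inclusion $D\subseteq C^{\perp}$ to an equality, hence what makes the column space of the matrix of \cref{lem:Mmat} equal to the dual code rather than merely contained in it; over a general finite ring that inclusion can be strict and $M^{\top}$ would then fail to generate $C^{\perp}$. I therefore expect the careful use of \cref{wood} — together with making sure the freeness argument only uses that $g_{n-k}$ is a unit — to be the main thing to get right; everything else is formal given \cref{lem:Mmat} and \cref{prop:module}.
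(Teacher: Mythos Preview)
Your proof is correct and the identification $C^{\perp}=\text{(column space of }M)$ via the Frobenius double-dual is exactly what the paper does, only phrased slightly differently: the paper sets $\tilde C:=\text{col}(M)$, observes $\tilde C\subset C^{\perp}$ and $C=\tilde C^{\perp}$, then applies \cref{wood} twice to count $|\tilde C|=|C^{\perp}|=|A|^{n-k}$ and conclude $\tilde C=C^{\perp}$. Your double-dual formulation $D^{\perp\perp}=D$ is the same counting argument in disguise.

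Where you genuinely diverge is in the freeness argument. The paper stays on the $\hbar$-side: since row $i$ of $M$ is $X^{i-1}\hbar\bmod f$, the upper-right $(n-k)\times(n-k)$ block of $M$ is lower triangular with diagonal $\hbar_k,\theta(\hbar_k),\dots,\theta^{n-k-1}(\hbar_k)$, all units; hence the last $n-k$ columns of $M$ are already an $A$-free basis of $C^{\perp}$, and their transpose is taken as $H$. You instead switch to the $g$-side, use the triangular block of $G$ with diagonal $\theta^{i}(g_{n-k})$ to put $G$ in standard form $(-N\mid I_k)$, and read off $C^{\perp}=\text{rowspace}(I_{n-k}\mid N^{\top})$. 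Both routes are valid; the paper's buys a single $(n-k)\times n$ matrix $H$ extracted from $M$ that simultaneously witnesses freeness and has the required $\theta,\delta$-expression in the coefficients of $\hbar,g$, whereas your $H=M^{\top}$ is $n\times n$ with redundant rows and your freeness certificate $(I_{n-k}\mid N^{\top})$ lives on the $g$-side, so the two conclusions are established by separate objects. This is only a cosmetic difference, but it is worth knowing that the paper's triangular-block observation for $M$ gives the minimal $H$ directly.
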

\begin{proof}
    We denote by  $\tilde{C}$ the code generated by the columns of the  $n\times n$ matrix $M$ in \cref{lem:Mmat} whose left kernel is exactly the code $C$. Then the columns of $M$ form a generating set of $\tilde{C}$.
    By construction we have that $\tilde{C}\subset C^\bot$. Note that $\bw\in C$ if and only if $\bw M=\0$, which is equivalent to $\bw$  is orthogonal to all generators of $\tilde{C}$. Therefore $C=\tilde{C}^\bot$. By Lemma \ref{wood} we have that $|\tilde{C}^\bot| \cdot |\tilde{C}|=|A|^n$. From $C=\tilde{C}^\bot$ and $|C|=|A|^k$ we then get $|\tilde{C}|=|A|^{n-k}$. By Lemma \ref{wood} we also have that $|C^\bot | \cdot  |C|=|A|^n$, which implies that $|C^\bot| =|A|^{n-k}$. Since  $\tilde{C}\subset C^\bot$ and both codes have the same number of elements we get   $\tilde{C}= C^\bot$.

It is shown in \cref{lem:Mmat} that for $i=1,\dots, n-k$, the $i$-th row of $M$ corresponds to $X^{i-1} \hbar$ (e.g.~the gray part in \eqref{eq:exMn3k2}).
This shows that the right-upper $(n-k)\times (n-k)$ submatrix of $M$ is lower triangular with invertible diagonal elements $\hbar_k,\theta(\hbar_k),\ldots,\theta^{n-k-1}(\hbar_k)$ and the right-most $n-k$ columns of $M$ are therefore linearly independent. Hence, the $A$-submodule generated by the right-most $n-k$ columns of $M$ contains $|A|^{n-k}$ elements.
This shows that $C^\bot$ is a free $A$-module generated by the right-most $n-k$ columns of $M$ which (after transposing) form a parity check matrix of $C$ (see \eqref{eq:Hn3k2} for an example).
  \end{proof}
\begin{example}
According to \cref{control}, the dual of the code in \cref{example_parity} is generated by the right-most two columns $M$ in \eqref{eq:exMn3k2} which (after transposing) form a parity check matrix of $C$,
\begin{eqnarray}  \label{eq:Hn3k2}
  H & = &
   \left( \begin{array}{ccc}
{\hbar_1} &
{  \hbar_1^\delta+\hbar_0^\theta} &
{\hbar_1^{\delta^2}+\hbar_0^{\theta\delta}+\hbar_0^{\delta\theta}{-\hbar_1^{\theta^2}f_1}}\\
{0 }& {\hbar_1^\theta} &
{  \hbar_1^{\theta\delta}+\hbar_1^{\delta\theta}+\hbar_0^{\theta^2}{-\hbar_1^{\theta^2}f_2}}
    \end{array}\right).
\end{eqnarray}
 \end{example}
\begin{theorem} \label{th2}
  Let $\sigma$ be an automorphism of order $2$ of $A$, $\theta$ be an {endomorphism} of the finite {Frobenius commutative}  ring $A$, $\delta$ be a $\theta$-derivation on $A$,  $R=A[X;\theta,\delta]$ a skew polynomial ring, $f\in R$ a monic polynomial having a right and left divisor $g$ (i.e.~$f=hg=g\hbar$) and ${\cC}=Rg/Rf\subset R/Rf$ a cyclic left module $(\theta,\delta)$-code. Let $C$ be the vector representation of $\cC$ and $C^\bot$ the dual code of $C$.
  If we apply $\sigma$ to all entries of the generator matrix $G^\bot\coloneq H$ of the dual code $C^\bot$, then we obtain a generator matrix $G^{\bot_\sigma}$ of the $\sigma$-Hermitian dual code  $C^{\bot_\sigma}$ of $C$.  The coefficients of the matrix $\sigma(H)$ are expressions  in  images under compositions of $\theta$, $\delta$ {and $\sigma$} of the coefficients
of $\hbar$ and $g$.   \end{theorem}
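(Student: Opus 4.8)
The plan is to reduce the Hermitian statement to the Euclidean one already established in \cref{control}, by identifying $C^{\bot_\sigma}$ with the coordinatewise image $\sigma(C^\bot)$ of the Euclidean dual, and then observing that applying a ring \emph{automorphism} entrywise to a generator matrix of a module produces a generator matrix of its image module. First I would record the elementary identity: for $\bv,\bc\in A^n$, letting $\sigma$ act coordinatewise on vectors and writing $\langle\cdot,\cdot\rangle$ for the Euclidean inner product,
\[
\langle \bv,\bc\rangle_\sigma \;=\; \sum_{i=1}^n v_i\,\sigma(c_i) \;=\; \sigma\!\Bigl(\sum_{i=1}^n \sigma^{-1}(v_i)\,c_i\Bigr) \;=\; \sigma\bigl(\langle \sigma^{-1}(\bv),\bc\rangle\bigr),
\]
which uses that $\sigma$ is a ring homomorphism and that $A$ is commutative. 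Since $\sigma$ is injective, $\langle\bv,\bc\rangle_\sigma=0$ iff $\langle\sigma^{-1}(\bv),\bc\rangle=0$; hence $\bv\in C^{\bot_\sigma}$ iff $\sigma^{-1}(\bv)\in C^\bot$, i.e.\ iff $\bv\in\sigma(C^\bot)$ (recall $\sigma$ has order dividing $2$, so $\sigma^{-1}=\sigma$ and this is unambiguous). Thus $C^{\bot_\sigma}=\sigma(C^\bot)$.

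Next I would transport generators through $\sigma$. By \cref{control}, $C^\bot$ is a free $A$-module and the rows $\br_1,\dots,\br_{n-k}$ of $H=G^\bot$ form an $A$-basis of $C^\bot$. Because $\sigma$ is a ring homomorphism, $\sigma\bigl(\sum_j a_j\br_j\bigr)=\sum_j \sigma(a_j)\,\sigma(\br_j)$ for all $a_j\in A$, so $\sigma(C^\bot)$ is contained in the $A$-span of the rows $\sigma(\br_1),\dots,\sigma(\br_{n-k})$ of $\sigma(H)$; conversely, since $\sigma$ is bijective on $A$, any $A$-combination $\sum_j b_j\sigma(\br_j)$ equals $\sigma\bigl(\sum_j \sigma^{-1}(b_j)\br_j\bigr)\in\sigma(C^\bot)$, and independence of the $\sigma(\br_j)$ follows the same way from independence of the $\br_j$. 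Hence the rows of $\sigma(H)$ are an $A$-basis of $\sigma(C^\bot)=C^{\bot_\sigma}$, i.e.\ $\sigma(H)$ is a generator matrix $G^{\bot_\sigma}$ of the $\sigma$-Hermitian dual. (As a consistency check one can invoke \cref{wood}: $\sigma$ is a bijection, so $|\sigma(C^\bot)|=|C^\bot|=|A|^{n-k}=|A|^n/|C|=|C^{\bot_\sigma}|$.)

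Finally, the claim on the entries is immediate from \cref{control}: each entry of $H$ is an expression in images under compositions of $\theta$ and $\delta$ of the coefficients of $\hbar$ and $g$, and the entries of $\sigma(H)$ are obtained by composing each such expression with $\sigma$, hence are expressions in images under compositions of $\theta$, $\delta$ and $\sigma$ of those coefficients. There is no substantial obstacle here; the only points that require care are that $\sigma$ is assumed to be an \emph{automorphism} (its bijectivity is exactly what makes the span- and independence-preservation arguments work, which would fail for a mere endomorphism), and that $\sigma$ has order $2$, which is what lets us write $C^{\bot_\sigma}=\sigma(C^\bot)$ directly rather than $\sigma^{-1}(C^\bot)$.
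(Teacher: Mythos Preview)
Your proof is correct and rests on the same core observation as the paper: applying $\sigma$ entrywise to the generator matrix $H$ of $C^\bot$ yields a generator matrix of $C^{\bot_\sigma}$. The organization differs slightly. The paper verifies directly that each row $\sigma(g_t^\bot)$ of $\sigma(H)$ satisfies $\langle g_s,\sigma(g_t^\bot)\rangle_\sigma=\langle g_s,g_t^\bot\rangle=0$ for every row $g_s$ of $G$, so the row-span of $\sigma(H)$ is contained in $C^{\bot_\sigma}$, and then invokes \cref{wood} to match cardinalities and conclude equality. You instead first establish the set-theoretic identity $C^{\bot_\sigma}=\sigma(C^\bot)$ from $\langle\bv,\bc\rangle_\sigma=\sigma(\langle\sigma^{-1}(\bv),\bc\rangle)$, and then transport the $A$-basis of $C^\bot$ through the automorphism $\sigma$; this route is marginally more self-contained in that \cref{wood} is not needed for this step (it was already used inside \cref{control}), whereas the paper's argument relies on it again to upgrade containment to equality.
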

\begin{proof}  For each row $g_s, s\in\{1,\dots,k\}$ of a generating matrix $G$ of $C$ and each row $g^\bot_t, t\in\{1,\dots,n-k\}$ of a generating matrix of $G^\bot$ of $C^\bot$ we have
$  \langle g_s,g^\bot_t \rangle=\sum_{i=1}^n g_{s,i}\,g^\bot_{t,i}=0$. Therefore
$$  \langle g_s,\sigma(g^\bot_t) \rangle_\sigma=\sum_{i=1}^n g_{s,i}\, \sigma(\sigma(g^\bot_{t,i}))=\sum_{i=1}^n g_{s,i}\, g^\bot_{t,i}=0.$$
Since the $n-k$ rows of $G^\bot$ generate a free code of dimension $|A|^{n-k}$ and $\sigma$ is an automorphism, the $n-k$ rows of $\sigma(G^\bot)$ also generate a free code of dimension $|A|^{n-k}$. Lemma \ref{wood} implies that they generate $C^{\bot_\sigma}$.
\end{proof}

\section{Computing all  Dual-Containing  $(\theta,\delta)$-Codes}
For the case $\delta=0$ and $A=\FF_q$, the generators of the dual code have been derived in \cite{BSU,BU,boucher2011note,BD,BD2}.
For an arbitrary Frobenius ring $A$ or $\delta\neq 0$, the dual code is much less studied. The algorithms in this section will allow us to show that the dual of a cyclic left module $(\theta,\delta)$-code is in general not a cyclic left module $(\theta,\delta)$-code.
\subsection{Polynomial Maps}
Our first goal is to transform algebraic expressions in the  images under $\theta$ and $\delta$ of the coefficients of $\hbar$ and $g$ (cf.\ $M$ in \eqref{eq:exMn3k2} above), into
multivariate polynomials over some subalgebra of $A$.
\begin{definition}
  \label{polmaps}
  A \textbf{polynomial map} on a ring $B$ is a map $f:B\to B;x\mapsto  \sum_{i=0}^s b_ix^i$, where $s\in \NN$ and $b_i\in B$.
\end{definition}
\begin{lemma}\label{lem:polymap} Let  $\theta$ be an endomorphism of the  finite   ring $A$, $\delta$ a $\theta$-derivation of $A$ and $B\subset A$ a subring. Let ${\cE}$ be a  system of finitely many equations over $A$ that are polynomial expressions in the  images under $\theta$ and $\delta$ of a finite set of variables $y_1,\ldots,y_m$.
  If  $A=B[a_1,\ldots,a_s]$ ($s\in \NN$) is a free $B$-module
  and the restriction of  $\delta$ and $\theta$ to $B$ are polynomial maps, then  all solutions  in  $A^m$ of the system  ${\cE}$ correspond to the   solutions   in $B^{ms}$ of a
system of polynomial equations over $B$ in  the variables $y_{1,1},\ldots ,y_{1,s},\ldots,y_{m,1},\ldots,y_{m,s}$ where $y_i=y_{i,1}a_1+\cdots +y_{i,s}a_s$.
\end{lemma}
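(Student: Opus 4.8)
The plan is to perform the substitution $y_i=\sum_{j=1}^s y_{i,j}a_j$ explicitly and to reduce everything to the free $B$-basis $a_1,\dots,a_s$ of $A$. Fix, once and for all, the structure data of $A$ as a $B$-algebra: write $a_\ell a_m=\sum_{t}\gamma_{\ell m t}\,a_t$, $\theta(a_\ell)=\sum_{m}\tau_{\ell m}a_m$, $\delta(a_\ell)=\sum_{m}d_{\ell m}a_m$ with all $\gamma_{\ell m t},\tau_{\ell m},d_{\ell m}\in B$, and fix univariate polynomials $\vartheta(x)=\sum_k \beta_k x^k$, $\Delta(x)=\sum_k c_k x^k$ over $B$ representing the polynomial maps $\theta|_B$ and $\delta|_B$ (\cref{polmaps}). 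The key step is the following claim, proved by induction on the length of a word $w$ in the two letters $\theta,\delta$: for every such $w$ and every $i$, the element $w(y_i)\in A$ can be written as $\sum_{\ell=1}^s P^{(w,i)}_\ell\,a_\ell$ where each $P^{(w,i)}_\ell$ lies in $B[y_{1,1},\dots,y_{m,s}]$. Only finitely many words $w$ occur in the finitely many equations of ${\cE}$, so this yields finitely many such polynomials. Granting the claim, we substitute these expressions into each defining polynomial of ${\cE}$, expand every product of $a_\ell$'s by the $\gamma_{\ell m t}$, and expand in the basis $a_1,\dots,a_s$ any coefficients from $A$ appearing in ${\cE}$; each equation of ${\cE}$ thereby becomes an identity $\sum_{\ell=1}^s Q_\ell\,a_\ell=0$ with $Q_\ell\in B[y_{1,1},\dots,y_{m,s}]$. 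Since $a_1,\dots,a_s$ is a \emph{free} $B$-basis, this holds in $A$ if and only if $Q_\ell=0$ in $B$ for every $\ell$. Collecting these equations over all members of ${\cE}$ gives a finite polynomial system over $B$, and the bijection $A^m\to B^{ms}$, $(y_1,\dots,y_m)\mapsto(y_{1,1},\dots,y_{m,s})$ — which is precisely the freeness of $A$ as a $B$-module of rank $s$ — carries the solution set of ${\cE}$ onto the solution set of this system.

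It remains to prove the claim. The base case $w=\varepsilon$ is immediate, since $y_i=\sum_j y_{i,j}a_j$ has the new variables themselves as coefficients. For the inductive step, suppose $z=\sum_\ell q_\ell a_\ell$ with $q_\ell\in B[y_{1,1},\dots,y_{m,s}]$. Using that $\theta$ is a ring endomorphism and $\delta$ satisfies the Leibniz rule of \cref{defder},
\[
\theta(z)=\sum_m\Bigl(\sum_\ell \theta(q_\ell)\,\tau_{\ell m}\Bigr)a_m,\qquad
\delta(z)=\sum_m\Bigl(\delta(q_m)+\sum_\ell \theta(q_\ell)\,d_{\ell m}\Bigr)a_m .
\]
Now each $q_\ell$ takes values in $B$, so $\theta(q_\ell)=\vartheta(q_\ell)$ and $\delta(q_\ell)=\Delta(q_\ell)$ by the definition of polynomial map; substituting the polynomial $q_\ell$ into the fixed univariate polynomials $\vartheta$ and $\Delta$ again produces polynomials over $B$ in $y_{1,1},\dots,y_{m,s}$. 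Hence the bracketed coefficients above lie in $B[y_{1,1},\dots,y_{m,s}]$, and the claim follows.

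The conceptual heart of the lemma — and the only place where the hypothesis is genuinely used — is this last sentence of the induction: $\theta$ and $\delta$ are merely additive, not $B$-linear, so applying them to a scalar coefficient $q_\ell\in B$ would in general leave the polynomial world unless $\theta|_B$ and $\delta|_B$ act on $B$ by honest polynomial formulas. Everything else is bookkeeping: a terminating induction over finitely many composition words, a finite expansion against the multiplication table of $A$ as a $B$-algebra, and reading off coordinates in a free basis. (The statement is phrased for a not-necessarily-commutative $A$; if $A$ is non-commutative one simply keeps the monomials $a_{\ell_1}a_{\ell_2}\cdots$ in their given order before applying the structure constants, while in the commutative setting used throughout this paper ordinary multivariate polynomials over $B$ suffice.)
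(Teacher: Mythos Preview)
Your proof is correct and follows essentially the same approach as the paper: expand each variable in the free $B$-basis, use the structure constants for multiplication and for $\theta,\delta$ on the basis elements, and invoke the polynomial-map hypothesis on $B$ to handle $\theta(q_\ell)$ and $\delta(q_\ell)$ recursively. Your version is in fact more explicit than the paper's, which simply says one can ``recursively transform'' the system; you make the induction on the length of the word $w$ in $\theta,\delta$ precise and spell out that freeness of the basis is what allows one to split each $A$-equation into $s$ separate $B$-equations.
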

\begin{proof}
The image of the $B$-basis $a_1,\ldots,a_s$ of $A$ under  $\delta$ and $\theta$ are expressions  of the form
$\theta(a_i)=\gamma_{i,1}a_1+\cdots +\gamma_{i,s}a_s$ and $\delta(a_i)=\beta_{i,1}a_1+\cdots +\beta_{i,s}a_s$  for some given  $\gamma_{i,j}$ and $\beta_{i,j}$ in $B$:
\begin{align*}
  y_i^{\theta} & = (y_{i,1}a_1+\cdots +y_{i,s}a_s)^\theta  \\
               &=  y_{i,1}^{\theta}a_1^{\theta} + \cdots  +y_{i,s}^{\theta}a_s^{\theta}\\
               & = y_{i,1}^{\theta}\left(\gamma_{1,1}a_1+\cdots +\gamma_{1,s}a_s\right)+\cdots +y_{i,s}^{\theta}\left(\gamma_{s,1}a_1+\cdots +\gamma_{s,s}a_s\right)\\
  y_i^{\delta} &= (y_{i,1}a_1+\cdots +y_{i,s}a_s)^\delta \;=\; (y_{i,1}a_1)^\delta+\cdots +(y_{i,s}a_s)^\delta\\
               & = y_{i,1}^\delta a_1+ y_{i,1}^\theta a_1^\delta +\cdots +y_{i,s}^\delta a_s+y_{i,s}^\theta a_s^\delta\\
& = y_{i,1}^\delta a_1+  y_{i,1}^{\theta}\left(\beta_{1,1}a_1+\cdots +\beta_{1,s}a_s\right)+ \cdots +y_{i,s}^\delta a_s+y_{i,s}^{\theta}\left(\beta_{s,1}a_1+\cdots +\beta_{s,s}a_s\right)
\end{align*}
Using
\begin{enumerate}
\item the algebra relations $a_ia_j=\mu_{i,j,1}a_1+\ldots+\mu_{i,j,s}a_s$ (where $\mu_{i,j,s}\in B$ are given),
\item the additive and multiplicative properties of $\theta$ and $\delta$,
\item the fact that  the restriction of  $\delta$ and $\theta$ to $B$ are polynomial maps on $B$ (so that $y_{i,j}^\theta$ and $y_{i,j}^\delta$ are polynomials in $y_{i,j}$ over $B$),
\end{enumerate}
we can recursively transform any system of polynomial equations in the variables $y_1,\ldots,y_m$, whose solutions are in $A^m$, into a system of polynomial equations in the variables $y_{1,1},\ldots ,y_{1,s},\ldots y_{m,1}, \ldots,\\y_{m,s}$, whose solutions are in $B^{ms}$.
\end{proof}

The following two examples show that there are endomorphism $\theta$ and {$\theta$-derivation} $\delta$ of a ring $A$ which are not polynomial maps over $A$, but only polynomial maps over a subring $B$.

\begin{example} \label{ex_F2v}
  Consider the Frobenius ring $A=\FF_2[v]/(v^2+v)$ of order $4$. There are two automorphisms $\theta_1=\Id$ and $\theta_2$ of order two, and two non-trivial endomorphisms $\theta_3$ and $\theta_4$. Any $\theta$-derivation $\delta$ is determined by $\delta(v)$ (note that $\delta(1)=\delta(0)=0$).  All the $\theta$-derivations are listed below:
   \[\begin{array}{|c||l|l||l|l|} \hline
   & \multicolumn{2}{c|}{Automorphism}  & \multicolumn{2}{c|}{Endomorphism} \\ \hline
 & \theta_1=\Id & \theta_2(v)=v+1 & \theta_3(v)= 0  & \theta_4(v)=1  \\ \hline \hline
\delta_1=0&  \cellcolor{gray!30}{v\mapsto 0}
& \cellcolor{gray!30}{v\mapsto 0}
&  \cellcolor{gray!30}{v\mapsto 0}\
& \cellcolor{gray!30}{v\mapsto 0}
\\ \hline
\delta_2 & & \cellcolor{gray!30}{v\mapsto 1}
& & \\ \hline
\delta_3 & & \cellcolor{gray!30}{v\mapsto v}
& \cellcolor{gray!30}{v\mapsto v}
& \\ \hline
\delta_4 & & \cellcolor{gray!30}{v\mapsto v+1}
& & \cellcolor{gray!30}{v\mapsto v+1}
 \\ \hline \end{array}\]
 The inner $\theta$-derivations are marked by a gray cell. Here all $\theta$-derivations are inner.

  Suppose that  the automorphism $\theta_2$ is a polynomial map on $A$ of the form
$$f: x\mapsto \sum_{i\in\NN_0} (\alpha_{i,1} v+ \alpha_{i,0}) x^i\, = \,  \sum_{i\in\NN_0} \alpha_{i,1} v x^i+  \sum_{i\in\NN_0} \alpha_{i,0} x^i \qquad (\alpha_{i,j}\in\FF_2).
$$ Then $\theta_2(0)=0 \Rightarrow  \alpha_{0,0}=0$. Since $\alpha_{i,j}\in \{0,1\}$, $f(v)$ is a multiple of positive powers of $v$. Since $v^2=v$ we get that $f(v)$ is a sum of $v$, which is either $v$ or $0$ in this ring. Since $\theta_2(v)=v+1$, we obtain that $\theta_2$ is not a polynomial map on $A$.
\end{example}
\begin{example}  \label{eg:ord4_u2} We keep the notation of \cref{eg:ord4_u2_0} for the ring  $A=\FF_2[u]/(u^2)$. The only automorphism of $A$ is the identity $\theta_1:x\mapsto x$ which is a polynomial map.
Suppose that a derivation $\delta$ of $A$ is given by a polynomial map $\delta:x\mapsto\sum_{i=0}^t b_ix^i$ over $A$ (where $b_i\in A$). Since $\delta(1)=0$, we must have $b_0=0$ in the polynomial map.
From $u^2=0$, we obtain $\delta:u\mapsto \sum_{i=1}^t b_iu^i = b_1u$.
Write $b_1=\beta_{1,1}u+\beta_{1,0}\in A$ for some $\beta_{1,1},\beta_{1,0}\in \FF_2$, then $\delta(u)= \beta_{1,1}u^2+\beta_{1,0} u=\beta_{1,0}u$, which can never be $u+1$ or $1$. Hence, $\delta_2(u)=1$ and $\delta_4(u)=u+1$ are not polynomial maps on $A$. For this ring we will always work over $B=\FF_2$ even for $\delta_1$ and $\delta_3$.
Codes of small length over $A$ are classified in \cite{DS,HU}.
\end{example}
\begin{lemma}
  \label{primering}
Automorphisms and derivations of a finite Frobenius ring $A$ are polynomial maps over the smallest unitary subring $B$ of $A$.
\end{lemma}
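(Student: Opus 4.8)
The plan is to pin down $B$ concretely and then observe that $\theta$ and $\delta$ act in the most trivial possible way on it. As recalled in the introduction, the smallest unitary subring $B$ of $A$ is the image of the canonical map $\ZZ\to A$, $1\mapsto 1$; explicitly $B=\{\,n\cdot 1_A : n\in\ZZ\,\}$, a unitary subring of $A$ isomorphic to $\FF_p$ (resp.\ to $\ZZ_m$) where $p$ (resp.\ $m$) is the characteristic of $A$. The point is that every element of $B$ is an integer multiple of $1_A$, and integer multiples are respected by any additive map.

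First I would check that $\theta$ and $\delta$ restrict to maps $B\to B$ and identify these restrictions. Since $\theta$ is a unitary endomorphism, $\theta(1_A)=1_A$, and additivity gives $\theta(n\cdot 1_A)=n\cdot\theta(1_A)=n\cdot 1_A$ for every $n\in\ZZ$; hence $\theta(B)\subseteq B$ and in fact $\theta|_B=\Id_B$. For $\delta$, applying the Leibniz rule of \cref{defder} to $1_A=1_A\cdot 1_A$ yields $\delta(1_A)=\delta(1_A)\,1_A+\theta(1_A)\,\delta(1_A)=2\,\delta(1_A)$, so $\delta(1_A)=0$; then additivity gives $\delta(n\cdot 1_A)=n\,\delta(1_A)=0$, so $\delta(B)=\{0\}\subseteq B$ and $\delta|_B$ is the zero map on $B$.

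It then only remains to record that both $\Id_B$ and the zero map are polynomial maps on $B$ in the sense of \cref{polmaps}, namely $x\mapsto x$ and $x\mapsto 0$ with coefficients in $B$. This finishes the argument. (In the prime-characteristic case one could instead invoke the classical fact that \emph{every} self-map of $\FF_p$ is a polynomial map via Lagrange interpolation, but the uniform computation above also handles the modular case $B\cong\ZZ_m$, where not every self-map is a polynomial map.) Note also that the Frobenius hypothesis is not actually used for this statement; only finiteness (ensuring $B$ is one of the two listed types) plays a role.

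There is essentially no genuine obstacle here; the only subtlety worth flagging is the distinction between ``polynomial map on $B$'' and ``polynomial map on $A$''. As \cref{ex_F2v} shows, an automorphism of $A$ need not be a polynomial map on $A$ even when $B=\FF_2$, so the content of the lemma lies precisely in restricting to the prime subring $B$ --- which is exactly the hypothesis required to invoke \cref{lem:polymap}.
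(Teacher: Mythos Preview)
Your proof is correct and follows exactly the same approach as the paper: identify $B$ as the prime subring (image of $\ZZ\to A$), observe that $\theta|_B=\Id_B$ and $\delta|_B=0$, and note these are polynomial maps. The paper's version simply asserts these two facts without justification, whereas you carefully derive $\theta(n\cdot 1_A)=n\cdot 1_A$ from additivity and $\delta(1_A)=0$ from the Leibniz rule; your added remarks on the unused Frobenius hypothesis and the contrast with \cref{ex_F2v} are accurate and helpful, but the core argument is identical.
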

\begin{proof}  The smallest unitary subring ${B}$ of $A$  is the image of the canonical map $\ZZ\to A$  given by $1\mapsto 1$ and  is either isomorphic to a finite field $\FF_p$ of prime order or to an integer modular ring $\ZZ_m=\ZZ/(m)$ (here $p$ or $m$ is the characteristic of the ring $A$). Since any automorphism $\theta$ is given by $x\mapsto x$ and $\theta$-derivation $\delta$ is given by $x\mapsto 0$ on  ${B}$, they are both polynomial maps on   ${B}$.
\end{proof}
\subsection{Computations via Gr\"obner Basis over $B\subset A$}
In this section we  assume that  $\theta$ and $\delta$ are polynomial maps over a subalgebra $B\subset A$ (this is always the case for the smallest unitary subring ${B}$ of $A$ by \cref{primering}) and that $A=B[a_1,\ldots,a_s]$ is a free $B$-module.
This will enable us to transform any expression in $\theta$ and $\delta$ over $A$ into a polynomial expression over $B$ (\cref{lem:polymap}). The classical algorithm to solve systems of polynomial equations in a multivariate polynomial commutative ring $B[y_{1,1},\ldots ,y_{1,s},\ldots,y_{m,1},\ldots,y_{m,s}]$ is via \textbf{Gr\"obner basis}.  This algorithm exists in particular if $B$ is a field or an integer quotient ring (\cite{adams1994introduction,magma}), and therefore always over the smallest unitary subring ${B}$ of the finite unitary ring $A$ (\cref{primering}).

\subsubsection{An Algorithm to Compute All Dual-Containing  $(\theta,\delta)$-Codes}
\label{algo2}
We first express the unknown coefficients in $A$ of $g$ and $\hbar$ as linear combinations in a given $B$-basis of $A$ over $B$ with unknown coefficients $x_i$ in $B$. The expressions in images under compositions of $\theta$ and $\delta$ of the coefficients of $\hbar$ and~$g$ then become polynomials in the variables $x_i$ in $B$. We then obtain a parity check matrix whose coefficients are  polynomials in the variables $x_i$.  We can impose that $g$ divides $g\hbar$ on the right by imposing that all the coefficients of the remainder, whose entries are polynomials in the unknown $x_i$, to be zero. We can also impose  $C^\bot \subset C$ by imposing all the entries $\bM^\top\cdot \bM$, which are also polynomials in the unknown $x_i$, to be zero. All these conditions lead to a multivariate polynomial system in the unknowns $x_i$ with coefficients in $B$. If a Gr\"obner basis algorithm exists for the ring $B$, then we can compute all dual-containing cyclic left module $(\theta,\delta)$-codes ${\cC}=Rg/Rf\subset R/Rf$ for the fixed parameters $[n,k]$. Note that ${\cC}^\bot \subset{\cC}$ is a property of the code ${\cC}$ and is therefore independent of the choice of $h,f,\hbar$. In other words, if ${\cC}^\bot \subset{\cC}$ holds for some valid solution of $h,f,\hbar$, then this will hold for any valid solution of $h,f,\hbar$. We therefore simply have to compute an elimination basis for the possible polynomials $g$ and keep those skew polynomials $g$ that can be extended to a solution of the whole system.
\begin{algorithm}[htb!]
  \caption{Computing all {\em dual-containing} cyclic module $(\theta,\delta)$-codes for given $[n,k]$.}
  \label{algo:GrobBasis}
  \KwIn{$A$, $\theta$, $\delta$, a subalgebra $B\subset A$ over which $A=B[a_1,\dots,a_s]$ is free and over which $\theta,\delta$  are polynomial maps and the Gr\"obner basis can be computed, code parameters $n,k$.}
  \KwOut{A set of solutions $\cP=\{g,\hbar, f\ |\ \cC=Rg/Rf \text{ is dual-containing}\}$}
  $P_1\gets B[g_{0,1},\ldots, g_{0,s},\dots,g_{n-k-1,1},\ldots,g_{n-k-1,s},\hbar_{0,1},\ldots,\hbar_{0,s},\ldots, \hbar_{k-1,1},\ldots, \hbar_{k-1,s}]$    \Comment*[r]{multivariate ring over $B$}
  $\cP\gets \{\}$
  \tcc*[r]{Initialize a set to collect $g,\hbar$ of self-dual codes}
  \ForEach{$\hbar_k=\sum_{j=1}^s\hbar_{k,j}a_j\in \{$invertible element of $A$\} \label{line:LCchoice}}{
    $\algoVar{LSEs}\gets \{$Constraints such that $g_{i,j}, \hbar_{i,j}\in B$\} \Comment*[r]{$g_{i,j}^p=g_{i,j}, \hbar_{i,j}^p=\hbar_{i,j}$ if $B=\mathbb{F}_p$}
    $g\gets \sum_{i=0}^{n-k-1}(\sum_{j=1}^sg_{i,j}a_j)X^i+ X^{n-k}$
    \Comment*[r]{$g \in P_1[X;\theta,\delta]$}
    $\hbar\gets\sum_{i=0}^{k-1} (\sum_{j=1}^s\hbar_{i,j}a_j) X^i +(\sum_{j=1}^s\hbar_{k,j}a_j) X^{k}$\Comment*[r]{$\hbar \in P_1[X;\theta,\delta]$}
    $f\gets g\cdot \hbar$\Comment*[r]{$LC(f)$ may not be monic but does not contain variable}
    $h, r\gets$ quotient, remainder of $g$ right dividing $f$
    \Comment*[r]{$h,r\in P_1[X;\theta,\delta]$}
    $\algoVar{LSEs}\overset{\text{Append}}{\longleftarrow}$\{All coefficients of $r$ are $0$ \}
    \Comment*[r]{implies $g\mid_r f$\label{line:g_rdiv_f}}
    $\bG\gets$ a generator matrix constructed from $g$\;
    $\bM\gets$ the matrix constructed from $\hbar$ according to \cref{example_parity}\;
    $\algoVar{LSEs}\overset{\text{Append}}{\longleftarrow}$\{All entries in $\bM^\top\cdot \bM$ are $0$\}
    \tcc*[r]{implies $C^\perp\subseteq C$}
    $\cS\gets$ \{solutions of $g_{0,1},\ldots,g_{n-k-1,s},\hbar_{0,1},\ldots, \hbar_{k-1,s}$ from the Groebner basis of $\algoVar{LSEs}$\}\label{line:solutions}\;
    $\cP\overset{\text{Append}}{\longleftarrow}\{g,\hbar,f\in A[X;\theta,\delta]: \forall$ solution in $\cS\}$\;
    \tcc{$g,\hbar\in A[X;\theta,\delta]$ are reconstructed by evaluating coefficients of $g,\hbar\in P_1[X;\theta,\delta]$ for each solution in $\cS$; $f$ is reconstructed by $f=g\cdot \hbar$}
  }
\end{algorithm}
\subsubsection{Is the Dual $C^{\perp_{\sigma}}$ of a Cyclic Module $(\theta,\delta)$-Code Again a Cyclic Module $(\theta,\delta)$-Code ?}
\label{sec:dual_principle}
\label{algo1}
In the following let $\sigma$ be the identity or an automorphism of order $2$ of $A$. Note that the rows of the generator matrix $G^{\perp_{\sigma}}=\sigma(H)$ in \cref{th2} correspond to skew polynomials $p_1,\ldots,p_{k}$ in $R/Rf$ which form an $A$-basis of the free code $C^{\perp_{\sigma}}$. If the (Hermitian-) dual code $C^{\perp_{\sigma}}$ is a cyclic module code $Rg^{\perp_{\sigma}}/R\tilde{f}\subset R/R\tilde{f}$ generated by some monic skew polynomial $g^{\perp_{\sigma}}$ of degree $k$, then this monic polynomial ${g^{\perp_{\sigma}}}$ is a left divisor of all the polynomials $p_1,\ldots,p_{k}$.

We follow the notations in Algorithm \ref{algo:GrobBasis}. We first set up a monic polynomial $${g^{\perp_{\sigma}}}=\sum_{i=0}^{k-1}(\sum_{j=1}^sg^{\perp_{\sigma}}_{i,j}a_j)X^i+ X^{k}\in R$$ in the unknowns $g^{\perp_{\sigma}}_{0,1}, \dots, g^{\perp_{\sigma}}_{0,s}, \dots, g^{\perp_{\sigma}}_{k-1,s}$ over $B$. Then we perform a right division of all polynomials $p_1,\ldots,p_{n-k}$ by $g^{\perp_{\sigma}}$ and set the remainders $r_1,\ldots,r_{n-k}$ to zero. Note that the coefficients of all $r_1,\ldots,r_{n-k}$ are polynomials in $B[g^{\perp_{\sigma}}_{0,1}, \dots, g^{\perp_{\sigma}}_{k-1,s}]$ and must all be zero. This leads to a polynomial system over $B$ that can be solved by an algorithm via Gr\"obner basis. If the Gr\"obner basis is $\{1\}$ then $C^{\perp_{\sigma}}$ is not a cyclic module code in $R/Rf$ for any $f\in R$. Otherwise the Gr\"obner basis gives the generator polynomial $g^{\perp_{\sigma}}$ of degree $k$ of the cyclic module code $C^{\perp_{\sigma}}$.
\section{Computational Results for $A=\FF_2[v]/(v^2+v)$}
\label{results}
 We  keep the notation used in \cref{ex_F2v} and compute the dual-containing cyclic left module $(\theta,\delta)$-code over the ring $A=\FF_2[v]/(v^2+v)$  using the algorithm given in \cref{algo2}. Lemma \ref{primering}  shows that we can use the subalgebra $B=\FF_2\subset A$ for the algorithm.  Codes of small length over $A$ are classified in \cite{HU}. We follow \cite{DS} and define the Lee weight of $0,1,v,v+1$  respectively as $0,2,1,1$ and the Bachoc weight respectively as $0,1,2,2$.
\cref{tab:F2v2+v} and \cref{tab:F2v2+v_herm} give an overview of the best Euclidean and Hermitian dual-containing codes ${\cC}=Rg/Rf\subset R/Rf$ (the algorithm found all such codes). The empty set indicates that the approach shows that no dual-containing code $Rg/Rf\subset R/Rf$ exists for the parameter $[n,k]$. A question mark indicates that such dual-containing codes exist, but we did not compute the minimal distance.
The codes that could not have been found without considering non-zero derivations are marked in gray; the codes that only can be found by a non-zero derivation and an endomorphism which is not an automorphism are marked in dark gray.
Examples of such codes are the three weight distributions of the $[6,4]$ codes in \cref{tab:HammingF2v2}.
 Table \ref{tab:HammingF2v2} and \ref{tab:HammingF2v2_Herm} show more precisely which Hamming weight enumerators could only be found by using specific $(\theta,\delta)$ combinations.
    \begin{table}[htbp]
    \caption{Results on dual-containing cyclic module $(\theta,\delta)$-code over $\FF_2[v]/(v^2+v)$.}
     {\tiny
    \begin{subtable}[htbp]{\textwidth}
    \centering
       \subcaption{Best Hamming, Lee and Bachoc $d_H,d_L,d_B$ distance of dual-containing $(\theta,\delta)$-codes over $\mathbb{F}_2[v]/[v^2+v]$.
     \label{tab:F2v2+v}}
  $      \begin{array}{|l|c|c|c|c|c|c|c|c|c|c|c|}
     \hline
    n \setminus k & 2 & 3 & 4 &5 &6 &7& 8 & 9 & 10& 11& 12\\   \hline  \hline
         3 & 1,1,2 &\multicolumn{10}{c|}{}\\ \hline
      4 & 2,2,4&  2,2,2 &\multicolumn{9}{c|}{}\\  \hline
  5 &  & \emptyset  &\emptyset &\multicolumn{8}{c|}{} \\  \hline
  6 &  &  2,2,2  & 2,2,2   & 2,2,2  &\multicolumn{7}{c|}{}\\  \hline
  7 &  &  &  3,3,5  &\emptyset   & \emptyset &\multicolumn{6}{c|}{} \\  \hline
  8 &  &  &   4,4,7   & 2,2,4  & 2,2,2  &2,2,2  &\multicolumn{5}{c|}{} \\  \hline
  9 &  &  &    &   \emptyset &   \emptyset  &   \emptyset  &\cellcolor{gray!35}1,1,2 &\multicolumn{4}{c|}{}\\  \hline
  10 &  &  &    & 2,2,2   &  2,2,2   &   \emptyset    & \emptyset &2,2,2 &\multicolumn{3}{c|}{}\\  \hline
  11 &  &  &    &    &    \emptyset  &  \emptyset   &\emptyset  &\emptyset  &\emptyset  &\multicolumn{2}{c|}{} \\  \hline
  12 &  &  &    &   &   4,4,6   & 3,3,4 & 2,2,? &2,?,? &?,?,? & ?,?,?& \\  \hline
  13 &  &  &    &   &    &   \emptyset    &\emptyset &\emptyset & \emptyset  &\emptyset &\emptyset \\  \hline
       \end{array}
     $
\end{subtable}}
    \begin{subtable}[htbp]{\textwidth}
    \centering
     \label{tab:exist_cyclic_dual}
\subcaption{For the dual-containing codes $C$, is $C^\bot$ a cyclic module code, according to \cref{sec:dual_principle}?}
       \begin{tabular}{|l|c|c|c|c|c|c|c|c|}
     \hline
    $n \setminus k$ & 2 & 3 & 4 &5 &6 &7& 8 & 9 \\   \hline  \hline
         3 & None &\multicolumn{7}{c|}{}\\ \hline
      4 & All & Some &\multicolumn{6}{c|}{}\\  \hline
  5 &  & /  & / &\multicolumn{5}{c|}{}\\  \hline
  6 &  &  All & Some & Some &\multicolumn{4}{c|}{} \\  \hline
  7 &  &  &  All & /   & / &\multicolumn{3}{c|}{}\\  \hline
  8 &  &  &  All  & Some  & Some & Some  &\multicolumn{2}{c|}{}\\  \hline
  9 &  &  &    &   /   & /   &  /   & None &\\  \hline
  10 &  &  &    & All  & Some  &   /     & /  & All \\  \hline
       \end{tabular}
\end{subtable}
{small
    \begin{subtable}[htbp]{\textwidth}
    \centering
    \label{tab:num_cyclic_dual}
       \subcaption{The number  of dual-containing $(\theta,\delta)$-codes and  codes whose dual is also a cyclic module $\theta,\delta)$ code.  (We only listed for the parameters marked with ``Some'' in \cref{tab:exist_cyclic_dual}(b) above.)}
       \begin{tabular}{|l|c|c|c|c|c|c|c|c|c|}
         \hline
         \multirow{2}{*}{$[n,k]$} &  \multicolumn{9}{c|}{\# of  dual-containing cyclic module  codes $Rg/Rf$ for each $(\theta,\delta)$} \\
                                  &  \multicolumn{9}{c|}{\# of above codes for which the dual code  is also a cyclic module $(\theta,\delta)$-code}   \\
                                    \hline &$(\mbox{Id};0)$ & $(\theta_2,0)$&$(\theta_2,\delta_2)$&$(\theta_2,\delta_3)$&$(\theta_2,\delta_4)$&$  (\theta_3,0)$&$(\theta_3,\delta_3)$&$(\theta_4,0)$&$(\theta_4,\delta_4)$ \\
         \hline
         \multirow{2}{*}{$[4,3]$} &  1 & 1&  3& 1& 1& 1& 2& 1& 2 \\
                                  &1& 1& 1& 1& 1& 1& 1& 1& 1 \\
         \hline
         \multirow{2}{*}{$[6,4]$} &  1& 1& 1& 2& 2& 1& 4& 1& 4  \\
         &  1& 1& 1& 1& 1& 1& 2& 1& 2  \\
         \hline
         \multirow{2}{*}{$[6,5]$} &  1& 1& 1& 2& 2& 1& 1& 1& 1  \\
         &  1& 1& 1& 1& 1& 1& 1& 1& 1  \\
         \hline
         \multirow{2}{*}{$[8,5]$} &  1& 3& 5& 1& 1& 1& 8& 1& 8  \\
         & 1& 3& 1& 1& 1& 1& 1& 1& 1  \\
         \hline
         \multirow{2}{*}{$[8,6]$} &  1& 3& 5& 1& 1& 1& 4& 1& 4  \\
         &  1& 3& 3& 1& 1& 1& 2& 1& 2  \\
         \hline
         \multirow{2}{*}{$[8,7]$} &  1& 1& 3& 1& 1& 1& 2& 1& 2  \\
         &  1& 1& 1& 1& 1& 1& 1& 1& 1  \\
         \hline
         \multirow{2}{*}{$[10,6]$} &  1& 1& 1& 1& 1& 1& 16& 1& 16 \\
                                  &  1& 1& 1& 1& 1& 1& 2& 1& 2 \\
         \hline
       \end{tabular}
\end{subtable}}
\end{table}
  \begin{table}[htbp]
    \centering
    \caption{Best Hamming, Lee and Bachoc distance of $\theta_2$-Hermitian dual-containing $(\theta,\delta)$-codes over $\mathbb{F}_2[v]/[v^2+v]$
     \label{tab:F2v2+v_herm}}
     $
       \begin{array}{|l|c|c|c|c|c|c|c|c|}
     \hline
    n \setminus k & 2 & 3 & 4 &5 &6 &7& 8 & 9 \\   \hline  \hline
      4 & 2,2,4&  2,2,2&\multicolumn{6}{c|}{} \\  \hline
  5 &  & \cellcolor{gray!20}{2,2,2} &\cellcolor{gray!20}1,1,2&\multicolumn{5}{c|}{}\\  \hline
  6 &  &  \cellcolor{gray!35}3,3,4  & 2,2,4   & 2,2,2 &\multicolumn{4}{c|}{} \\  \hline
  7 &  &  &  3,3,5  &\cellcolor{gray!20}{1,1,2}   &\cellcolor{gray!20} {1,1,2} &\multicolumn{3}{c|}{}\\  \hline
  8 &  &  &   3,3,6   & 2,2,4  & 2,2,2  &2,2,2 &\multicolumn{2}{c|}{} \\  \hline
  9 &  &  &    &   \cellcolor{gray!35}{1,1,2}&   \emptyset  &   \emptyset  &\emptyset &\multicolumn{1}{c|}{}\\  \hline
  10 &  &  &    & 2,2,2   &  2,2,2   &   \emptyset    & \emptyset &2,2,2 \\  \hline
       \end{array}
     $
\end{table}
\begin{table}[htbp]
    \centering
  \caption{Hamming weight enumerator of dual-containing $(\theta,\delta)$-codes over $\mathbb{F}_2[v]/[v^2+v]$.
    \label{tab:HammingF2v2}}
  $
  \begin{array}{|c|l|l|} \hline
[n,k]  & \mbox{Hamming Weight}    & \mbox{Constructed with }  (\theta,\delta)\\ \hline \hline
    \multirow{2}{*}{[4,2]} & 1+6w^2+9w^4
                                  & \text{all combinations $(\theta,\delta)$ provide such an example}  \\ \cline{2-3}
& \cellcolor{gray!20} 1+4w^2+4w^3+7w^4
                                  & (\theta_2,\delta_2), (\theta_3,\delta_3),(\theta_4,\delta_4)   \\ \hline\hline
    \multirow{1}{*}{[6,3]} & 1+9w^2+27w^4+\ldots 
                                  & \text{all combinations  $(\theta,\delta)$ provide such an example} \\ \hline\hline
    \multirow{4}{*}{[6,4]} & 1+9w^2+24w^3+\ldots 
                                  & \text{all combinations  $(\theta,\delta)$ provide such an example} \\ \cline{2-3}
& \cellcolor{gray!20} 1+17w^2+24w^3+\ldots  
                                  & (\theta_2,\delta_3),(\theta_2,\delta_3)   \\ \cline{2-3}
& \cellcolor{gray!35} 1+2w+11w^2+\ldots  
                                  & (\theta_3,\delta_3),(\theta_4,\delta_4)    \\ \cline{2-3}
&\cellcolor{gray!35}  1+13w^2+24w^3+\ldots  
                                  & (\theta_3,\delta_3),(\theta_4,\delta_4)  \\ \hline \hline
    \multirow{4}{*}{[8,4]} & 1+12w^2+54w^4+\ldots  
                                  & \text{all combinations  $(\theta,\delta)$ provide such an example} \\ \cline{2-3}
& 1+28w^4+56w^5+\ldots  
                                  &  (\theta_2,0)   \\ \cline{2-3}
& \cellcolor{gray!20} 1+4w^2+38w^4+\ldots  
                                  & (\theta_2,\delta_2),(\theta_3,\delta_3),(\theta_4,\delta_4)   \\ \hline
 \end{array}$
\end{table}

\begin{table}[htbp]
    \centering
  \caption{Hamming weight enumerator of $\theta_2$-Hermitian dual-containing $(\theta,\delta)$-codes over \\$\mathbb{F}_2[v]/[v^2+v]$.
    \label{tab:HammingF2v2_Herm}}
  $
\begin{array}{|c|l|l|} \hline
[n,k]  & \mbox{Hamming Weight}    & \mbox{Constructed with }  (\theta,\delta)\\ \hline \hline
\multirow{1}{*}{[4,2]} & 1+6w^2+9w^4
& \text{all combinations  $(\theta,\delta)$ provide such an example} \\
\cline{2-3}&1+2w^2+8w^3+5w^4 &
(\theta_{2},0)\\
\hline
\multirow{1}{*}{[4,3]} & 1+18w^2+\ldots
& \text{all combinations  $(\theta,\delta)$ provide such an example} \\
\cline{2-3}&\cellcolor{gray!20}1+2w+16w^2+\ldots&
(\theta_{2},\delta_2),(\theta_3,\delta_3),(\theta_4,\delta_4)\\
\cline{2-3}&\cellcolor{gray!20}1+2w+12w^2+\ldots &
(\theta_{2},\delta_3), (\theta_2,\delta_4)\\
\hline
\multirow{1}{*}{[5,3]} & \cellcolor{gray!20}1+8w^2+14w^3+\ldots
& (\theta_2,\delta_3),(\theta_2,\delta_4) \\
\cline{2-3}&\cellcolor{gray!35}1+w+6w^2+\ldots &
(\theta_{3},\delta_{3}),
(\theta_{4},\delta_{4})\\
\hline
\multirow{1}{*}{[5,4]} &\cellcolor{gray!20}1+3w+22w^2+\ldots &
(\theta_{2},\delta_{3}),
(\theta_{2},\delta_{4})\\
\hline
\multirow{1}{*}{[6,3]} & 1+9w^2+27w^4+\ldots
& \text{all combinations  $(\theta,\delta)$ provide such an example} \\
\cline{2-3}&\cellcolor{gray!35}1+8w^3+21w^4+\ldots &
(\theta_{3},\delta_{3}),
(\theta_{4},\delta_{4})\\
\hline
 \end{array}
 $
\end{table}

    \begin{table}[htbp]
  \centering
  \caption{
$[6,4]$ dual-containing $(\theta,\delta)$-codes over $\mathbb{F}_2[v]/[v^2+v]$ with Hamming weight enumerator $1+13w^2+24w^3+\dots$.}
  \label{tab:gG_n6_k4}
  \begin{tabular}{l|c|c|c}
Index & $g$& $(\theta,\delta)$ &$\mathbf{G}$  \\
\hline
1& $g=X^2 + X + v + 1$ &  $(\theta_{3},\delta_{3})$
  & $\begin{pmatrix}
v + 1 & 1 & 1 & 0 & 0 & 0\\
v & 1 & 1 & 1 & 0 & 0\\
v & 0 & 1 & 1 & 1 & 0\\
v & 0 & 0 & 1 & 1 & 1
\end{pmatrix}
                    $ \\
    \hline
2& $g=X^2 + (v + 1)X + 1$   & $(\theta_{3},\delta_{3})$
    & $\begin{pmatrix}
1 & v + 1 & 1 & 0 & 0 & 0\\
0 & v + 1 & 1 & 1 & 0 & 0\\
0 & v & 1 & 1 & 1 & 0\\
0 & v & 0 & 1 & 1 & 1
\end{pmatrix}
$ \\
    \hline
3&    $g=X^2 + vX + 1$
&$(\theta_{4},\delta_{4})$
&$\begin{pmatrix}
1 & v & 1 & 0 & 0 & 0\\
0 & v & 1 & 1 & 0 & 0\\
0 & v + 1 & 1 & 1 & 1 & 0\\
0 & v + 1 & 0 & 1 & 1 & 1
\end{pmatrix}
$ \\
    \hline
4& $g=X^2 + X + v$   &
$(\theta_{4},\delta_{4})$
 &$\begin{pmatrix}
v & 1 & 1 & 0 & 0 & 0\\
v + 1 & 1 & 1 & 1 & 0 & 0\\
v + 1 & 0 & 1 & 1 & 1 & 0\\
v + 1 & 0 & 0 & 1 & 1 & 1
\end{pmatrix}
$
  \end{tabular}
\end{table}

Besides the complexity of a  Gr\"obner basis,  the complexity of our approach is  also  linked to the fact that many decompositions $f=hg=g \hbar$ can exist for a fixed $g$, and that all combinations lead to the same code whose generating matrix is constructed only from $g$ and the corresponding $A[X;\theta,\delta]$. To illustrate this we present in more detail the results for the $[6,4]$ code with Hamming weight enumerator $1+13w^2+24w^3+\dots$\ . There are four possible generator polynomials $g$ presented in \cref{tab:gG_n6_k4}. Note that given $g$ and $A[X;\theta,\delta]$ one can compute a generating matrix immediately. We consider the first polynomial $g=X^2 + X + v + 1\in \left(\mathbb{F}_2[v]/[v^2+v]\right)[X; \theta_{3},\delta_{3}]$ of  \cref{tab:gG_n6_k4}. There exist $8$ \emph{non-central} polynomials $f$ for which there are polynomials $h,\hbar$ such that $f=hg=g\hbar$, i.e., $g$ is a left and right divisor of $f$:
\begin{eqnarray*}
  f_1& = & X^6 + vX^4 + vX^3 + vX + v + 1 = (X^4 + X^3 + vX^2 + X + v + 1) \cdot g\\
  f_2 & = & X^6 + X^5 + (v + 1) X^4 + X^3 + v X + v + 1 \\
  & = & (X^4 + v X^2 + (v + 1) X + 1) \cdot g\\
  f_3& = & X^6 + (v + 1) X^4 + v X^3 + v X^2 + X + v + 1\\
  & = & (X^4 + X^3 + (v + 1) X^2 + 1) \cdot g\\
   f_4 &=& X^6 + X^5 + v X^4 + X^3 + v X^2 + X + v + 1\\
   & = & (X^4 + (v + 1) X^2 + v X + v + 1 ) \cdot g\\
   f_5 &=& X^6 + v X^4 + v X^3 + X^2 + (v + 1) X = (X^4 + X^3 + v X^2 + X + v) \cdot g\\
   f_6 &= &X^6 + X^5 + (v + 1) X^4 + X^3 + X^2 + (v + 1) X \\ & = & (X^4 + v X^2 + (v + 1) X) \cdot g\\
   f_7 &= &X^6 + (v + 1) X^4 + v X^3 + (v + 1) X^2 = (X^4 + X^3 + (v + 1) X^2) \cdot g\\
    f_8& =& X^6 + X^5 + v X^4 + X^3 + (v + 1) X^2 = (X^4 + (v + 1) X^2 + v X + v) \cdot g
\end{eqnarray*}
 For each $f$, there is a unique $h$ corresponding to $f=hg$ and 16 distinct $\hbar$ such that $f=g\hbar$, where one of $\hbar$ is equal to $h$.
The following are the other 15 distinct $\hbar\neq h$ such that $f_1=g\cdot \hbar$:
 \begin{align*}
   f_1 
   &= g\cdot \left(X^4 + (v + 1) X^3 + v X^2 + (v + 1) X + v + 1\right)\\
   &= g\cdot\left( X^4 + X^3 + v X^2 + (v + 1) X + v + 1\right)\\
        &=g\cdot\left(X^4 + (v + 1) X^3 + (v + 1) X + v + 1 \right)
   = g\cdot \left(X^4 + X^3 + (v + 1) X + v + 1\right)\\
        &=g\cdot \left(X^4 + (v + 1) X^3 + v X^2 + X + v + 1\right)
   = g\cdot \left(X^4 + (v + 1) X^3 + X + v + 1 \right)\\
        &= g\cdot \left(X^4 + X^3 + X + v + 1 \right)
   = g\cdot \left( X^4 + (v + 1) X^3 + v X^2 + (v + 1) X + 1\right)\\
       & = g\cdot \left(X^4 + X^3 + v X^2 + (v + 1) X + 1 \right)
   = g\cdot \left( X^4 + (v + 1) X^3 + (v + 1) X + 1 \right)\\
       & = g\cdot \left(X^4 + X^3 + (v + 1) X + 1 \right)
   = g\cdot \left( X^4 + (v + 1) X^3 + v X^2 + X + 1\right)\\
       & = g\cdot \left(X^4 + X^3 + v X^2 + X + 1  \right)
   = g\cdot \left(X^4 + (v + 1) X^3 + X + 1 \right)\\
       & = g\cdot \left(  X^4 + X^3 + X + 1\right)
 \end{align*}

The  $[8,4,d_H=5]$ code in  \cref{tab:F2v2+v_herm} that achieves the Singleton bound in Hamming metric are obtained with
 $g=X^4 + (v + 1) X^3 + X^2 + v X + 1$ and $g=X^4 + v X^3 + X^2 + (v + 1) X + 1$ in $\left(\mathbb{F}_2[v]/[v^2+v]\right)[X; \mbox{id},\theta_{2}]$.

 We apply the algorithm presented in \cref{algo1} to verify for which  dual-containing codes $C=Rg/Rf\subset R/Rf$ the dual code $C^\bot$ is again a cyclic module $(\theta,\delta)$-code. See in \cref{tab:exist_cyclic_dual}b and \cref{tab:num_cyclic_dual}c for an overview. We list two examples below which show that the dual of a dual-containing cyclic module $(\theta,\delta)$-code is not always
 a cyclic module $(\theta,\delta)$-code:
\begin{itemize}
\item
   For $[n=4,k=3]$, we found three $g\in A[X; \theta_2,\delta_2]$ that generate dual-containing cyclic module codes:
$     g_1=X + v + 1$, $ g_2=X + 1$, $g_3=X + v$
   where only the dual of $g_2 = X + 1$ is a cyclic module code, with $g_2^{\perp}=X^3 + X^2 + X + 1$.
\item For $[n=6, k=4]$, we found four $g\in A[X;\theta_3,\delta_3]$ that generate dual-containing cyclic module codes:
$ g_1 =X^2 + (v + 1) X + v + 1$,
$g_2=X^2 + X + 1$,
$g_3=X^2 + X + v + 1$,
$g_4=X^2 + (v + 1)X + 1$.
Only the dual of $g_2$ and $g_4$ are cyclic module codes, with $g_2^{\perp}=X^4 + X^3 + X + 1$ and $g_4^{\perp}= X^4 + (v + 1)X^3 + X + v + 1$, respectively.
 \end{itemize}

\section{Computational Results for $A=\FF_2[u]/(u^2)$}
We keep the notations of \cref{eg:ord4_u2_0}
for the ring $A=\FF_2[u]/(u^2)$.
Lemma \ref{primering} and Algorithm \ref{algo:GrobBasis} show that we can search dual-containing cyclic left module $(\theta,\delta)$-code ${\cC}=Rg/Rf\subset R/Rf$ over the subalgebra $B=\FF_2\subset A$ using a Gr\"obner basis approach. We follow \cite{DS} and define the Lee weight of $0,1,u,u+1$  respectively as $0,1,2,1$ and the Euclidean weight respectively as $0,1,4,1$.
\cref{tab:F2u2} give an overview of the best dual-containing codes ${\cC}=Rg/Rf\subset R/Rf$ (the algorithm found all such codes).
For the cell marked in gray, the dual-containing cyclic module codes are only found from the maps $(\mbox{id},\delta_2)$ and $(\mbox{id},\delta_4)$.
\begin{table}[htbp]
  \caption{Best Hamming, Lee, and Euclidean distances of dual-containing cyclic module $(\theta,\delta)$-codes over $\FF_2[u]/(u^2)$.
     \label{tab:F2u2}}
     $$
       \begin{array}{|l|c|c|c|c|c|c|c|c|}
     \hline
    n \setminus k & 2 & 3 & 4 &5 &6 &7& 8 & 9 \\   \hline  \hline
      4 &  2,4,4& 2,2,2&\multicolumn{6}{c|}{} \\  \hline
  5 &  & \emptyset  &\cellcolor{gray!20}1,2,2 &\multicolumn{5}{c|}{}\\  \hline
  6 &  &  2,4,4 & 2,2,2 & 2,2,2 &\multicolumn{4}{c|}{} \\  \hline
         7 &  &  &  3,3,3 & \emptyset   & {1,2,2}&\multicolumn{3}{c|}{}
         \\  \hline
  8 &  &  &  4,4,4  & 2,4,4  & 2,2,2 &2,2,2 &\multicolumn{2}{c|}{}  \\  \hline
  9 &  &  &    &   \emptyset   & \emptyset   &  \emptyset   &\cellcolor{gray!20}1,2,2 &\\  \hline
  10 &  &  &    & 2,4,6  & 2,4,5  &   \emptyset     & \emptyset  &2,2,2 \\  \hline
       \end{array}
     $$
    \end{table}

 \cref{tab:exist_cyclic_dual_F2u2} gives an overview whether the dual codes $C^\bot$ of the dual-containing codes found by Algorithm \ref{algo:GrobBasis} are again cyclic module $(\theta,\delta)$-codes.
 \begin{table}[htbp]
\caption{For which dual-containing cyclic module codes $C$ is the dual $C^\bot$ again a cyclic module code, according to \cref{sec:dual_principle}?}
\label{tab:exist_cyclic_dual_F2u2}
\begin{center}
       \begin{tabular}{|l|c|c|c|c|c|c|c|c|}
     \hline
    $n \setminus k$ & 2 & 3 & 4 &5 &6 &7& 8 & 9 \\   \hline  \hline
      4 & All & All &\multicolumn{6}{c|}{} \\  \hline
  5 &  & /  & {
              \begin{tabular}{@{}c@{}}
                $(\mbox{id},\delta_2)$: None\\
                $(\mbox{id},\delta_4)$: All
              \end{tabular}}&\multicolumn{5}{c|}{}\\
                 \hline
  6 &  &  All  & All  & All &\multicolumn{4}{c|}{}  \\  \hline
  7 &  &  &  All  & /   & All &\multicolumn{3}{c|}{} \\  \hline
  8 &  &  &  All   & All   & All  &All &\multicolumn{2}{c|}{}   \\  \hline
         9 &  &  &    &   /   & /   &  /   & {
                                             \begin{tabular}{@{}c@{}}
                                              $(\mbox{id},\delta_2)$: None \\
                                              $(\mbox{id},\delta_4)$: Some
                                             \end{tabular}} &\\  \hline
  10 &  &  &    & All   & All   &   /     & /  &All  \\  \hline
       \end{tabular}\end{center}
\end{table}

 Table \ref{tab:HammingF2u2} shows examples for which Hamming weight enumerators could only be found using specific $(\theta,\delta)$ combinations. In particular the gray cells indicate those that can only be obtained using a non-zero derivation.
\begin{table}[htbp]
  \caption{Hamming weight enumerator of dual-containing $(\theta,\delta)$-codes over $\mathbb{F}_2[u]/[u^2]$.
    \label{tab:HammingF2u2}}
  $$
  \begin{array}{|c|l|l|} \hline
[n,k]  & \mbox{Hamming Weight}    & \mbox{Constructed with }  (\theta,\delta)\\ \hline \hline
\multirow{2}{*}{[4,2]} & 1+2w^2+8w^3+5w^4  & (\Id,0), (\Id,\delta_2), (\Id,\delta_3), (\theta_2,\delta_2)   \\ \cline{2-3}
       &   1+6w^2+9w^4  & \text{all maps} 
    \\ \hline  \hline
    \multirow{5}{*}{[8,4]} & 1+ 4w^2+30w^4+\ldots 
                       &  (\Id,0), (\theta_2,\delta_2)   \\ \cline{2-3}
       & 1+ 4w^2+46w^4+ \ldots 
                       & (\Id,0)   \\ \cline{2-3}
       & 1+ 4w^2+ 16w^3+ \ldots 
                       &   (\Id,0)   \\ \cline{2-3}
       & 1+ 12w^2+54w^4+ \ldots
                       &  \text{all maps}  \\ \cline{2-3}
       &\cellcolor{gray!20}  1+ 26w^4+64w^5+\ldots 
                       &  (\Id,\delta_2)   \\ \hline  \hline
    \multirow{4}{*}{[8,5]} & 1+4w^2+16w^3+94w^4+\ldots 
                                  &  (\Id,0),(\Id,\delta_2)   \\ \cline{2-3}
       & 1+4w^2+16w^3+110w^4+\ldots 
                                     &     (\Id,0)   \\ \cline{2-3}
       & 1+12w^2+102w^4+\ldots  
                                  &   \text{all maps}  \\ \cline{2-3}
       &\cellcolor{gray!20} 1+16w^2+8w^3+114w^4+\ldots 
                            & { (\Id,\delta_2) } \\ \hline
  \end{array}
  $$
\end{table}

\section{Computational Results for $A=\FF_4$}
 Consider the field $\FF_4=\FF_2(\alpha)$  where $\alpha^2=\alpha+1$. There are two automorphisms: $\theta_1=\Id$ and the Frobenius automorphism $\theta_2: x\mapsto x^2$. $\theta_2$ is of order $2$ and is a polynomial map on both $\FF_4$ and $\FF_2$. All the $\theta$-derivations are inner derivations and are marked in gray in the list below:
  \begin{equation*} 
  \begin{array}{|c||l|l|l|l|} \hline
      &\multicolumn{2}{c|}{Automorphism} \\ \hline
 & \theta_1=\Id & \theta_2(\alpha)=\alpha+1 \\ \hline \hline
\delta_1=0&   \cellcolor{gray!30}{\alpha\mapsto 0} &  \cellcolor{gray!30}{\alpha\mapsto 0
}  \\ \hline
\delta_2 & &  \cellcolor{gray!30}{\alpha\mapsto 1
}   \\ \hline
\delta_3 & &  \cellcolor{gray!30}{\alpha\mapsto \alpha
}  \\ \hline
\delta_4 & &  \cellcolor{gray!30}{\alpha\mapsto \alpha+1
} \\ \hline \end{array}
  \end{equation*}

  Following \cite{DS} we define the Lee weight of $0,1,\alpha,\alpha+1$ respectively as $0,2,1,1$ and following \cite{LingSole2001} we define the Euclidean weight respectively as $0,1,2,1$.

  \cref{tab:F4_herm} shows the existence and the best Hamming, Lee and Euclidean distance of the $\theta_2$-Hermitian dual-containing cyclic module $(\theta,\delta)$-codes $\cC=Rg/Rf\subset R/Rf$ over $\FF_4$.
  The gray cells indicate the codes that can only be obtained using a non-zero derivation.
  \begin{table}[htbp]
    \caption{The best Hamming, Lee and Euclidean $d_H,d_L,d_E$ distance of $\theta_2$-Hermitian dual-containing codes $Rg/Rf\subset R/Rf$ over $\mathbb{F}_4$.
     \label{tab:F4_herm}}
     $$
       \begin{array}{|l|c|c|c|c|c|c|c|c|}
     \hline
    n \setminus k & 2 & 3 & 4 &5 &6 &7& 8 & 9\\   \hline  \hline
      4 & 2,2,2&  2,2,2 &\multicolumn{6}{c|}{}\\  \hline
  5 &  & 3,3,3  &\cellcolor{gray!20}1,1,1 &\multicolumn{5}{c|}{} \\  \hline
  6 &  &  4,4,4  & 2,2,2   & 2,2,2  &\multicolumn{4}{c|}{}\\  \hline
  7 &  &  &  3,3,3  & \emptyset  & \cellcolor{gray!20}1,1,1 &\multicolumn{3}{c|}{} \\  \hline
  8 &  &  &   2,2,2  & 2,2,2  & 2,2,2 &2,2,2 &\multicolumn{2}{c|}{}\\  \hline
  9 &  &  &    &   \emptyset&   \emptyset &   \emptyset  &\cellcolor{gray!20}1,1,1& \\  \hline
  10 &  &  &    &  4,4,4  &  3,3,3 & 2,2,2     & 2,2,2 & 2,2,2\\  \hline
       \end{array}
     $$
\end{table}

 \cref{tab:HammingF4} provides some examples of the Hamming weight distributions.
\begin{table}[htbp]
  \caption{Weight enumerator of $\theta_2$-Hermitian dual-containing cyclic module $(\theta,\delta)$ codes  over $\mathbb{F}_4$.
    \label{tab:HammingF4}}
  $$
  \begin{array}{|c|l|c|} \hline
[n,k]  & \mbox{Hamming Weight Enumerator}    & \mbox{Constructed with }  (\theta,\delta)\\ \hline \hline
    \multirow{2}{*}{[4,3]}
       & 1+ 18w^2 +24w^3 +211w^4
                                             & \text{all maps}   \\ \cline{2-3}
 &\cellcolor{gray!20} 1+ 6w+ 12w^2+ 18w^3+ 27w^4
                                             & (\theta_2,\delta_2)   \\ \hline\hline
    \multirow{1}{*}{[5,4]}
       & \cellcolor{gray!20}1+ 9w+ 30w^2+ 54w^3+ 81w^4+81w^5
    & (\theta_2,\delta_2 ) \\
    \hline\hline
    \multirow{2}{*}{[6,5]}
    & 1+ 45w^2 +120w^3+ 315w^4 +360w^5 +183w^6 & \text{all maps}   \\ \cline{2-3}
      & \cellcolor{gray!20}1+ 12w+ 57w^2+ 144w^3+ 243w^4+\ldots
    & (\theta_2,\delta_2 ) \\ \hline\hline
    \multirow{1}{*}{[7,6]}
&\cellcolor{gray!20} 1+ 15w+ 93w^2+ 315w^3+ 675w^4 + \ldots &  (\theta_2,\delta_2 )\\
     \hline     \hline
     \multirow{2}{*}{[8,7]}
     & 1+ 84w^2+ 336w^3+ 1470w^4+ \ldots &\text {all maps}\\ \cline{2-3}
&\cellcolor{gray!20} 1+ 18w+ 138w^2+ 594w^3+ 1620w^4+ \ldots
&(\theta_2,\delta_2 )\\
     \hline\hline
     \multirow{1}{*}{[9,8]}
& \cellcolor{gray!20} 1+ 21w+ 192^2+ 1008w^3+ 3402w^4+\ldots &  (\theta_2,\delta_2 )\\
     \hline     \hline
      \multirow{2}{*}{[10,9]}
     & 1+ 135w^2+ 720w^3+ 4410w^4+ 15120w^5+\ldots &\text {all maps}\\ \cline{2-3}
& \cellcolor{gray!20} 1+ 24w+ 255w^2+ 1584w^3+ 6426w^4+ \ldots
&(\theta_2,\delta_2 )\\       \hline
 \end{array}
 $$
\end{table}
 This shows that in Hermitian case, non-zero derivation does produce other code than in the $\delta=0$ case. In the $\delta=0$ case we could not exhibit new codes.

\section{Computation Results for the Galois Ring $A=GR(4,2)$}
The galois ring $A=GR(4,2)=Z_4[u]=(\ZZ/4\ZZ)[u]/(u^2+u+1)$ is a Frobenius ring  of order $16$. This ring has two automorphisms:
$\theta_1=\Id$ and $\theta_2(u)=3u+3$ of order $2$.
The zero derivation is the only $\mbox{id}$-derivation. The $\theta_2$-derivations are all inner (i.e.~$\delta: a\mapsto \beta a-\theta_2(a)\beta, \forall \beta\in A$):
$\delta_1(u) = {0}$, $\delta_2(u) = {u}$, $\delta_3(u) ={2u} $, $\delta_{4}(u) = {3u} $, $\delta_5(u) =   {1}$, $\delta_{6}(u) = {u+1}$, $\delta_{7}(u) =  {2u+1} $, $\delta_8(u) =  {3u+1}$, $\delta_9(u) =  {2} $, $\delta_{10}(u) = {u+2} $, $\delta_{11}(u) = {2u+2}$, $\delta_{12}(u) = {3u+2 } $, $\delta_{13}(u) ={ 3}$, $\delta_{14}(u) ={u+3} $, $\delta_{16}(u) = {3u+3 }$.

We computed all $[4,2]$ self-dual and $[4,3]$ dual-containing cyclic left module $(\theta,\delta)$-codes over $A=GR(4,2)$ by Algorithm \ref{algo:GrobBasis}.
For the $[4,2]$ codes, there are 8 $g$'s given by each map in \cref{tab:existanceZ4_x2_x_1}. They generate $[4,2,d_H=3]$ codes with distinct codebooks (i.e.~the codewords in the codes are not all equal), however, with the same weight enumerator $1+60w^3+195w^4$. For each $g$ there are 16 $f$'s which are all central and including one in the form of $X^n-a$ for some $a\in\ZZ_4$.
The $[4,3]$ codes can be obtained from all maps. From each map, there are four unique $g$'s. For each $g$ there are more than 1000 $f$'s which include at least one central $f$ and an $f$ in the form of $X^n-a$ for some $a\in\ZZ_4$. All the codes have the same weight enumerator given in \cref{tab:existanceZ4_x2_x_1}.

  \begin{table}[htbp]
  \caption{The best Hamming distance $d_H$ of dual-containing codes $Rg/Rf\subset R/Rf$ over $GR(4,2)$.
     \label{tab:existanceZ4_x2_x_1}}
     $$
       \begin{array}{|l|c|c|c|c|}
         \hline
         [n,k]  & \text{existing code for map }(\theta_i,\delta_j) & \text{best }d_H& \text{Weight Distribution}\\   \hline  \hline
  \multirow{2}{*}        {[3,2]} & (1,1), (2,2), (2,4), (2,6), (2,8), & \multirow{2}{*}{2}  & 1 + 45w^2 + 210w^3\\
  & (2,10), (2,12), (2,14), (2,16) & &  \\
         \hline
         [4,2] &  (2, 1), (2, 3), (2,9), (2,11) & 3  & 1 + 60w^3 + 195w^4\\
         \hline
         [4,3] & \text{All maps} & 2 &1 + 90w^2 + 840w^3 + 3165w^4\\
         \hline
         [5,3] & \emptyset & / & / \\
         \hline
       \end{array}
     $$
\end{table}

\bibliographystyle{IEEEtranS}
\bibliography{refs}
\end{document}
